\newcommand{\mdots}{,\ldots,}
\newcommand*{\inlineequation}[2][]{%
  \begingroup
    \refstepcounter{equation}%
    \ifx\\#1\\%
    \else
      \label{#1}%
    \fi
    \relpenalty=10000 %
    \binoppenalty=10000 %
    \ensuremath{%
      #2%
    }%
    ~\@eqnnum
  \endgroup
}
\newcommand{\dpmse}{{\rm{SE_{\text{d-PM}}}}}
\newcommand{\dpmrmse}{{\rm{RMSE_{\text{d-PM}}}}}
\newcommand{\adpmrmse}{{\rm{ARMSE_{\text{d-PM}}}}}
\newcommand{\despritrmse}{{\rm{RMSE_{\text{d-ESPRIT}}}}}
\newcommand{\adespritrmse}{{\rm{ARMSE_{\text{d-ESPRIT}}}}}
\newcommand{\aespritrmse}{{\rm{ARMSE_{\text{ESPRIT}}}}}
\newcommand{\dPMCent}  					{\scriptsize{Conventional PM}}
\newcommand{\reals}[0]{\mathbb{R}}
\newcommand{\complex}[0]{\mathbb{C}}
\newcommand{\itr}[1]{ \ifthenelse {\equal{#1}{}} { () } { (#1)} }
\newcommand{\cpy}[1]{ \ifthenelse {\equal{#1}{}} { () } { [#1]} }
\newcommand{\chrod}[1]{ \ifthenelse {\equal{#1}{}} { \delta } { \delta_{#1}} }
\newcommand{\expect}[1]{\mathbb{E}[#1]}
\newcommand{\diag}[1]{{\rm{diag}}[#1]}
\newcommand{\dist}[1]{ \tilde{#1} }  
\newcommand{\card}[1]{ {\rm{card}}[#1] }
\newcommand{\eq}{\triangleq}
\newcommand{\mean}[1]{#1}
\newcommand{\ones}[1]{ \ifthenelse {\equal{#1}{}} { \pmb{1} } { \pmb{1}_{#1}} }
\newcommand{\zeros}{ \pmb{0} }
\newcommand{\id}{\pmb{I}}
\newcommand{\mRe}{ {\rm{Re}} }
\newcommand{\trace}[1]{{\rm Tr}[#1]}
\newcommand{\msr}[0]{\pmb{z}}
\newcommand{\msrs}{\pmb{Z}}
\newcommand{\msrscal}[0]{{z}}
\newcommand{\dsmsrscal}[0]{\tilde{z}}
\newcommand{\mcov}[0]{\pmb{R}}
\newcommand{\smcov}[0]{\hat{\pmb{R}}}
\newcommand{\dsmcov}[0]{\tilde{\pmb{R}}}
\newcommand{\mcoveige}[0]{{\lambda}}
\newcommand{\mcoveiges}[0]{\pmb{\Lambda}}
\newcommand{\mcoveigv}[0]{\pmb{u}} 
\newcommand{\mcoveigvs}[0]{\pmb{U}}
\newcommand{\smcoveige}[0]{\hat{{\lambda}}}
\newcommand{\smcoveiges}[0]{\hat{\pmb{\Lambda}}}
\newcommand{\smcoveigv}[0]{\hat{\pmb{u}}}
\newcommand{\smcoveigvs}[0]{\hat{\pmb{U}}}
\newcommand{\dsmcoveige}[0]{\dist{{\lambda}}}
\newcommand{\dsmcoveiges}[0]{\dist{\pmb{\Lambda}}}
\newcommand{\dsmcoveigv}[0]{\dist{\pmb{u}}}
\newcommand{\dsmcoveigvs}[0]{\dist{\pmb{U}}}
\newcommand{\sensorsel}[0]{\pmb{T}}
\newcommand{\avgvec}{\pmb{x}}
\newcommand{\avgscal}{x}
\newcommand{\avgmat} [1] { \ifthenelse {\equal{#1}{}} {\pmb{W}} {w_{#1}} }
\newcommand{\neighbor}[1]{ \ifthenelse {\equal{#1}{}} {\mathcal{N}} {\mathcal{N}_{#1}} }
\newcommand{\avgmateige}{\alpha}
\newcommand{\avgmateigv}{ \pmb{\beta} }
\newcommand{\avgmateigvscal}{ {\beta} }
\newcommand{\itrpm}{Q}
\newcommand{\itrac}{P}
\newcommand{\itract}{P_1}
\newcommand{\itractt}{P_2}
\newcommand{\itracttt}{P_3}
\newcommand{\error}{\delta}
\newcommand{\vamone}{\pmb{B}}
\newcommand{\vamtwo}{\pmb{\Gamma}}
\newcommand{\vamfour}{\pmb{\mcoveigvs}}
\newcommand{\vavone}{\pmb{h}}
\newcommand{\rmsevone}{\pmb{\gamma}}
\newcommand{\rmsevtwo}{\pmb{\mu}}
\newcommand{\vamthree}{\tilde{\sensorsel}}
\newcommand{\sepdist}{d}
\newcommand{\doa}{{\theta}}
\newcommand{\dsdoa}{\dist{\doa}}
\newcommand{\doas}{\pmb{\doa}}
\newcommand{\snr}{\rm{SNR}}
\newcommand{\noise}{\pmb{n}}
\newcommand{\noisescal}{{n}}
\newcommand{\steermat}{\pmb{A}}
\newcommand{\steervec}{\pmb{a}}
\newcommand{\sourcesig}{{s}}
\newcommand{\sourcesigs}{\pmb{\sourcesig}}
\newcommand{\noisevariance}{\sigma^2}
\newcommand{\scov}{\pmb{P}}
\newcommand{\refsenloc}{\pmb{\xi}}
\newcommand{\sincosdoa}{\pmb{\kappa}}
\newcommand{\uppersel}{\overline{\pmb{J}}}
\newcommand{\lowersel}{\underline{\pmb{J}}}
\newcommand{\sigsub}[1]{ \ifthenelse {\equal{#1}{}} {\mcoveigvs_{\rm{s}}} {\mcoveigvs_{{\rm{s}}, #1}} }
\newcommand{\noisesub}[1]{ \ifthenelse {\equal{#1}{}} {\mcoveigvs_{\rm{n}}} {\mcoveigvs_{{\rm{n}}, #1}} }
\newcommand{\sigeigs}[1]{ \ifthenelse {\equal{#1}{}} {\mcoveiges_{\rm{s}}} {\mcoveiges_{{\rm{s}}, #1}} }
\newcommand{\noiseeigs}[1]{ \ifthenelse {\equal{#1}{}} {\mcoveiges_{\rm{n}}} {\mcoveiges_{{\rm{n}}, #1}} }
\newcommand{\uppersigsub}[1]{ \ifthenelse {\equal{#1}{}} {\overline{\mcoveigvs}_{\rm{s}}} {\overline{\mcoveigvs}_{{\rm{s}}, #1}} }
\newcommand{\lowersigsub}[1]{ \ifthenelse {\equal{#1}{}} {\underline{\mcoveigvs}_{\rm{s}}} {\underline{\mcoveigvs}_{{\rm{s}}, #1}} }
\newcommand{\espritdelay}{\pmb{\Psi}}
\newcommand{\espritdelayeig}{\psi}
\newcommand{\ssigsub}[1]{ \ifthenelse {\equal{#1}{}} {\hat{\mcoveigvs}_{\rm{s}}} {\hat{\mcoveigvs}_{{\rm{s}}, #1}} }
\newcommand{\suppersigsub}[1]{ \ifthenelse {\equal{#1}{}} {\hat{\overline{\mcoveigvs}}_{\rm{s}}} {\hat{\overline{\mcoveigvs}}_{{\rm{s}}, #1}} }
\newcommand{\slowersigsub}[1]{ \ifthenelse {\equal{#1}{}} {\hat{\underline{\mcoveigvs}}_{\rm{s}}} {\hat{\underline{\mcoveigvs}}_{{\rm{s}}, #1}} }
\newcommand{\sespritdelay}{\hat{\pmb{\Psi}}}
\newcommand{\sespritdelayeig}{\hat{\psi}}
\newcommand{\dssigsub}[1]{ \ifthenelse {\equal{#1}{}} {\dist{\mcoveigvs}_{\rm{s}}} {\dist{\mcoveigvs}_{{\rm{s}}, #1}} }
\newcommand{\dsuppersigsub}[1]{ \ifthenelse {\equal{#1}{}} {\dist{\overline{\mcoveigvs}}_{\rm{s}}} {\dist{\overline{\mcoveigvs}}_{{\rm{s}}, #1}} }
\newcommand{\dslowersigsub}[1]{ \ifthenelse {\equal{#1}{}} {\dist{\underline{\mcoveigvs}}_{\rm{s}}} {\dist{\underline{\mcoveigvs}}_{{\rm{s}}, #1}} }
\newcommand{\dsespritdelay}{\dist{\pmb{\Psi}}}
\newcommand{\dsespritdelayeig}{\dist{\psi}}
\newcommand{\espritdelayeiglv}{\pmb{q}}
\newcommand{\espritdelayeigrv}{\pmb{r}}
\newcommand{\sosone}{\expect{  \error \dsmcoveigv_l({\itrac}) \, \error \dsmcoveigv_m^H({\itrac}) }}
\newcommand{\sostwo}{\expect{  \error \dsmcoveigv_l({\itrac}) \, \error \dsmcoveigv_m^T({\itrac}) }}
\newcommand{\sosonett}{\expect{  \error \dsmcoveigv_i(\itrac) \, \error \dsmcoveigv_j^H(\itrac) }}
\newcommand{\sostwott}{\expect{  \error \dsmcoveigv_i(\itrac) \, \error \dsmcoveigv_j^T(\itrac) }}
\newcommand{\devev}{\dist{u}}
\newcommand{\pdsmcoveigv}[1]{\dsmcoveigv_{#1}({\itrac})}
\newcommand{\pdsmcov}{\dsmcov(\itrac)}
\newcommand{\pvavone}[1]{\vavone_{#1}(\itrac)}
\newcommand{\pvavonet}[2]{\vavone_{#1}^{#2}(\itrac)}
\newcommand{\espritdelayt}{\dist{\pmb{C}}}
\newcommand{\espritdelaytt}{\dist{\pmb{F}}}
\newcommand{\espritdelayst}{\dist{{c}}}
\newcommand{\espritdelaystt}{\dist{{f}}}
\newcommand{\mlw}{1.1}
\newcommand{\mfw}{0.55}
\newcommand{\dESPRITT}					{\scriptsize{$\despritrmse$ $\itrac\!\!=\!\!10$}}
\newcommand{\dESPRITTT}					{\scriptsize{$\despritrmse$ $\itrac\!\!=\!\!20$}}
\newcommand{\dESPRITTTT}				{\scriptsize{$\despritrmse$ $\itrac\!\!=\!\!30$}}
\newcommand{\AdESPRITT}		 			{\scriptsize{$\adespritrmse$ $\itrac\!\!=\!\!10$}}
\newcommand{\AdESPRITTT}				{\scriptsize{$\adespritrmse$ $\itrac\!\!=\!\!20$}}
\newcommand{\AdESPRITTTT}				{\scriptsize{$\adespritrmse$ $\itrac\!\!=\!\!30$}}
\newcommand{\cESPRIT}				    {\scriptsize{Conventional ESPRIT}}
\newcommand{\Rao}   		        	{\scriptsize{$\aespritrmse$}} 
\newcommand{\CRB}				        {\scriptsize{CRB \cite{see2004direction}}}
\newcommand{\RMSE}	 					{\scriptsize{RMSE (degree) }}  
\newcommand{\SNR}						{\scriptsize{SNR (dB) }}
\newcommand{\NumberofSamples}           {\scriptsize{Number of Samples $N$}}
\newcommand{\NormalizedRMSE}            {\scriptsize{RMSE}}
\newcommand{\dPMT}  					{\scriptsize{$\dpmrmse$ $\itrac\!\!=\!\!10$}}
\newcommand{\dPMTT}  					{\scriptsize{$\dpmrmse$ $\itrac\!\!=\!\!20$}}
\newcommand{\dPMTTT}  					{\scriptsize{$\dpmrmse$ $\itrac\!\!=\!\!30$}}
\newcommand{\AdPMT}			 			{\scriptsize{$\adpmrmse$ $\itrac\!\!=\!\!10$}}
\newcommand{\AdPMTT}					{\scriptsize{$\adpmrmse$ $\itrac\!\!=\!\!20$}}
\newcommand{\AdPMTTT}					{\scriptsize{$\adpmrmse$ $\itrac\!\!=\!\!30$}}
\titleformat{\section}{\large\bfseries}{\thesection.}{.5em}{}
\titlespacing*{\section}{0pt}{*3}{*2}
\titleformat{\subsection}{\normalfont\bfseries}{\thesubsection.}{.5em}{}
\titlespacing*{\subsection} {0pt}{*3}{*2}
\titleformat{\subsubsection}{\normalfont\bfseries}{\thesubsubsection.}{.5em}{}
\titlespacing*{\subsubsection} {0pt}{*3}{*2}
\theoremstyle{plain}
\newtheorem{thm}{Theorem}[] 
\begin{document}
                    
\author{Wassim Suleiman\thanks{This work was supported by the LOEWE Priority Program Cocoon  
(http://www.cocoon.tu-darmstadt.de).
A small part of this work is presented in \cite{suleiman2013decentralized}.}\thanks{%
The project ADEL acknowledges the financial support of the Seventh
Framework Programme for Research of the European Commission under
grant number: 619647.}, Marius Pesavento, Abdelhak M. Zoubir}  

\date{}  
\title{Performance Analysis of the Decentralized Eigendecomposition and ESPRIT Algorithm}         

\maketitle

{\small \noindent\textbf{Abstract:}
 
In this paper, 
we consider performance analysis of the decentralized power method
for the eigendecomposition of the sample covariance matrix
based on the averaging consensus protocol.
An analytical expression of the second order statistics 
of the eigenvectors obtained from 
the decentralized power method 
which is required for computing the mean square error (MSE)
of subspace-based estimators
is presented.
We show that the decentralized power method is not an
asymptotically consistent estimator of the eigenvectors
of the true measurement covariance matrix
unless the 
averaging consensus protocol 
is carried out over an infinitely large number of iterations.
Moreover, we introduce the decentralized ESPRIT algorithm which 
yields fully decentralized
direction-of-arrival (DOA) estimates.
Based on the performance analysis of the decentralized power method,
we derive an analytical expression of the MSE of DOA estimators 
using the decentralized ESPRIT algorithm.
The validity of
our asymptotic results is demonstrated by simulations.}

{\small \noindent\textbf{keywords:}
decentralized eigendecomposition,
power method,
decentralized DOA estimation,
ESPRIT, 
averaging consensus.}


\section{Introduction}

Centralized processing in sensor networks
requires the collection of measurements 
or sufficient statistics 
from all sensor nodes at a fusion center (FC)
before processing to obtain meaningful estimates.
A major drawback of a such centralized processing scheme with a single FC,  
is the existence of 
communication bottlenecks 
in large sensor networks with
multi-hop communications \cite{Scaglione2008, suleiman2013decentralized}.
Averaging consensus (AC) protocols \cite{degroot1974reaching,olfati2004consensus,Xiao2004,olfati2007consensus,xiao2007distributed}
achieve an iterative fully decentralized calculation of the average of scalars
that are distributed over a network of nodes.
AC protocols use only local communications between neighboring nodes,
thus, avoiding multi-hop communication.
Moreover, AC protocols perform computations at the nodes and require no FC. 
Thus, AC protocols eliminate communication bottlenecks.
These attributes of AC protocols make them attractive 
and a fully scalable alternative to centralized processing schemes 
in large sensor networks \cite{Scaglione2008}.

The eigendecomposition of the sample covariance matrix
is required in many applications, such as 
signal detection \cite{wax1985detection, williams1990using, Wax1985},
machine learning \cite{bishop2006pattern},
and DOA estimation 
\cite{Schmidt1986,Roy1989,Stoica1990, pesavento2000unitary, Pesavento2002,see2004direction, parvazi2011new, Suleiman2014searchfree}.
Conventionally, the eigendecomposition is carried out in a centralized fashion,
which hinders its application in large sensor networks.
In \cite{Scaglione2008}, 
an algorithm which achieves 
a fully decentralized eigendecomposition of the sample covariance matrix
is introduced.
This algorithm combines the conventional power method (PM) \cite[p.~450]{golub2012matrix},
which represents a centralized iterative eigendecomposition algorithm,  
with the AC protocol to achieve a fully decentralized eigendecomposition
of the sample covariance matrix.
We refer to this algorithm as the decentralized power method (d-PM).
Analytical expressions of the second order statistics of 
the eigenvectors and eigenvalues of the
conventional (centralized) sample covariance matrix 
are presented in \cite[Theorem~9.2.2]{brillinger2001time}.
The expressions from \cite{brillinger2001time} 
are  asymptotic in the number of samples,
i.e., they become accurate as the number of samples increases.
In \cite{li1993performance}
a different approach, 
which is asymptotic in the effective 
signal-to-noise ratio
$(\snr)$,
is proposed.
This approach holds even for the case of one sample 
if the noise variance is sufficiently small
and can be used to derive a general performance bound for 
DOA estimation \cite{roemer2014analytical, steinwandt2014r}.
However, 
the accuracy of the eigendecomposition 
obtained from the sample covariance matrix using the d-PM
not only suffers from finite sample effects 
and finite PM iterations, 
but also depends on the convergence speed of the AC protocol. 
This additional mismatch is introduced by the decentralized implementation 
and can be mitigated if the  AC protocol is carried out over a large number of iterations.
However, a large number of AC iterations is not always possible
since it is associated with a large communication overhead and latency.
A performance analysis of the decentralized eigendecomposition
which considers estimation errors introduced by the AC protocol is
of wide interest for a large variety of applications.

In \cite{suleiman2013decentralized}, we presented 
a fully decentralized DOA estimation algorithm 
using partly calibrated arrays.
Our DOA estimation algorithm combines 
the d-PM with the conventional ESPRIT algorithm \cite{Roy1989} 
and is thus referred to as the decentralized ESPRIT (d-ESPRIT) algorithm.
The numerical simulations carried out in \cite{suleiman2013decentralized} and
\cite{suleiman2014sam} show that the d-ESPRIT algorithm
achieves similar performance as the conventional ESPRIT algorithm
when a large number of AC iterations is used.
However, an analytical study 
of the performance of the d-ESPRIT algorithm which supports these simulations has not been considered so far.
Moreover, the behavior of the d-ESPRIT algorithm when only a small number of AC
iterations is carried out has not been studied before. 

The first and main contribution of this paper consists in 
the derivation of an analytical expression
of the second order statistics of the eigenvectors for 
the sample covariance matrix computed using the d-PM.
Based on this expression,
we show that the d-PM is not a consistent estimator of the 
eigenvectors of the true measurement covariance matrix,
unless the AC protocol 
is carried out over an infinitely large number of iterations. 
Moreover, we show that when the number of AC iterations used in the d-PM converges to infinity, 
our expression and 
the conventional expression in \cite[Theorem~9.2.2]{brillinger2001time} 
become equivalent.  
The second contribution of this paper consists in 
the derivation of an analytical expression of 
the MSE in DOA estimation using the d-ESPRIT algorithm.

The remainder of this paper is organized as follows.
The measurement model is introduced in Section \ref{sec:model}.
In Section~\ref{sec:d-PM},
we briefly revise the AC protocol \cite{Xiao2004} and the d-PM \cite{Scaglione2008} and 
their main properties, used later in the analysis.
Section~\ref{sec:d-PM-analysis}
considers the performance analysis of the d-PM,
namely the computation of the second order statistics of the eigenvectors resulting from the d-PM.
The d-ESPRIT algorithm and its performance analysis are presented in Section~\ref{sec:d-ESPRIT}.
Simulation results in Section~\ref{sec:simualtion}
compare 
the performance of the d-PM and the d-ESPRIT algorithm with
our analytical expressions of
Sec.~\ref{sec:d-PM-analysis} and
Sec.~\ref{sec:d-ESPRIT}
and with the Cram\'er-Rao Bound (CRB)
for partly calibrated arrays \cite{see2004direction}.

In this paper,
$(\cdot)^T$, 
$(\cdot)^*$
and 
$(\cdot)^H$
denote the 				
transpose, 
complex conjugate and 
the Hermitian transpose operations.
The element-wise (Hadamard) product,
diagonal or block diagonal matrices and
the trace of a matrix are denoted by 
$\odot$, 
$\diag{\cdot}$
and
$\trace{\cdot}$,
respectively. 
The real part and the argument of complex numbers are denoted by
$\mRe[\cdot]$
and 
$\arg[\cdot]$, and $\jmath$ is the imaginary unit.
The expectation operator and the Kronecker delta are expressed as $\expect{\cdot}$
and  
$\chrod{i,j}$,
while
$[\cdot]_{i, j}$, 
$[\cdot]_{i}$,
$\id_i$,
$\zeros_{i}$
and
$\ones{i}$
stands for
the $(i, j)$th entry of a matrix,
the $i$th entry of a vector,
the $i \times i$ identity matrix,
the all zeros vector of size $i$
and
the all ones vector of size $i$,
respectively.
The sets of real and complex numbers are denoted by  
$\reals$ and 
$\complex$.
Variable $x$ at the $q$th PM iteration is expressed as $x^{\itr{q}}$.
The conventional (centralized) estimate of 
$x$ is denoted by $\hat{x}$
whereas its decentralized estimate (computed using the AC protocol) is denoted by $\dist{x}$.
In decentralized estimation,
we distinguish between estimates 
which are available at all nodes,
such an estimate 
at the $k$th node
is denoted by $\dist{x}_{\cpy{k}}$,
and estimates which are distributed among all the nodes, 
where the $k$th node
maintains only a part of the corresponding estimate 
denoted by $\dist{x}_{{k}}$.

\vspace{-2mm}
 
\section{Measurement Model}
\label{sec:model}
We consider a network of $M = \sum_{k=1}^K M_k$ sensors clustered in $K$ nodes,
where the $k$th node comprises of $M_k$ sensors.
The assignment of the sensors to the individual nodes is characterized by 
the sensor selection matrix $\sensorsel$,
whose entries are defined as
\begin{equation}
[\sensorsel]_{i, j} \! \eq \! \left\{\begin{array}{c l}
\!\!1, \!& \text{if the $i$th sensor belongs to the $j$th node}\\
\!\!0, \!& \text{otherwise},
\end{array}\right.
\label{eq:def-t}
\end{equation}
where $i=1 \mdots M$ and $j=1 \mdots K$. 

The measurement vector at the $k$th node at time $t$
is denoted as $\msr_k(t) \in \complex^{M_k \times 1}$
and the overall measurement vector is denoted as 
\vspace{-1mm}
\begin{equation}
\msr(t) \eq [\msr_1^T(t) \mdots \msr_K^T(t)]^T \in \complex^{M \times 1}.
\label{eq:measurements}
\end{equation}
The random measurement vector $\msr(t)$ is assumed to be zero-mean with covariance matrix 
$\mcov \triangleq \expect{ \msr(t) \msr^H(t) }$. 
The eigendecomposition of the true measurement covariance matrix 
$\mcov$
is defined as
\vspace{-1mm}
\begin{equation}
\mcov \eq \mcoveigvs \mcoveiges \mcoveigvs^H,
\label{eq:cov-eig-def} 
\vspace{-1mm}
\end{equation}
where $\mcoveigvs \triangleq [\mcoveigv_1 \mdots \mcoveigv_M]$
and
$\mcoveiges \triangleq \diag{ \mcoveige_1 \mdots \mcoveige_M }$
and
$\mcoveigv_1 \mdots \mcoveigv_M$
are the eigenvectors of $\mcov$, corresponding to the 
eigenvalues $\mcoveige_1 > \cdots > \mcoveige_M$.
For the later use, we define the matrix
\vspace{-1.25mm}
\begin{equation}
\vamone_l \eq \vamfour_{-l} \vamtwo^{-1}_{l} \vamfour_{-l}^H,
\label{eq:def-varmone}
\vspace{-1.25mm}
\end{equation}
where 
$\vamfour_{-l} = [\mcoveigv_1 \mdots \mcoveigv_{l-1}, \mcoveigv_{l+1} \mdots \mcoveigv_{M}]$  
and 
$\vamtwo_{l} = \diag{\mcoveige_1-\mcoveige_l \mdots \mcoveige_{l-1}-\mcoveige_l, \mcoveige_{l+1}-\mcoveige_l \mdots \mcoveige_M-\mcoveige_l}$.

In practice, the covariance matrix $\mcov$ is not available,
and can only be estimated from $N$ observations of $\msr(t), t=1 \mdots N$ as
\begin{equation}
\smcov \eq \frac{1}{N} \sum_{t=1}^{N} \msr(t) \msr^H(t).
\label{eq:def-smcov}
\end{equation}
We refer to the conventional estimator of 
the sample covariance matrix in Eq.~(\ref{eq:def-smcov})
as the centralized estimator, 
since it requires that all measurements from every node are available at a FC. 
Let $\smcoveiges, \smcoveigvs, \smcoveige_i$ and $\smcoveigv_i$
be the estimates of 
$\mcoveiges, \mcoveigvs, \mcoveige_i$ and $\mcoveigv_i$ for $i=1 \mdots M$, respectively,
obtained from the centralized eigendecomposition 
of the sample covariance matrix $\smcov$.

In the following section, the decentralized estimation of
the eigenvalues and eigenvectors of the true covariance matrix $\mcov$
using AC protocol,
i.e., without a FC, is introduced and analyzed.
We denote as
$\dsmcoveiges, \dsmcoveigvs, \dsmcoveige_i$ and $\dsmcoveigv_i$
the decentralized estimates of $\mcoveiges, \mcoveigvs, \mcoveige_i$ and $\mcoveigv_i$
for $i=1 \mdots M$
respectively.

\section{Averaging Consensus and the Decentralized Power Method}
\label{sec:d-PM} 
In this section, 
the AC protocol and its convergence properties are reviewed. 
Moreover, the decentralized eigendecomposition using the d-PM \cite{Scaglione2008} is revised.

\vspace{-1mm}
\subsection{Averaging Consensus}
\label{sec:ac}
Let $\avgscal_1 \mdots \avgscal_K$ denote $K$ scalars
which are available at $K$ distinct nodes in the network,
where the $k$th node stores only the $k$th scalar.
Denote the conventional average of these scalars as $\mean{\avgscal}\eq \frac{1}{K}  \sum_{k=1}^K \avgscal_k$.
In AC protocols \cite{degroot1974reaching,olfati2004consensus,Xiao2004,olfati2007consensus,xiao2007distributed}, 
$\mean{\avgscal}$ is computed iteratively,
where at the $p$th AC iteration,
the $k$th node 
sends its current local estimate of the average $\avgscal_k^{\itr{p-1}}$
to its neighboring nodes, denoted as the set $\neighbor{k}$,
and receives 
the corresponding average estimates of the respective neighboring nodes.
Then, the $k$th node updates its local estimate 
of the average as follows
\vspace{-1mm}
\begin{equation}
\avgscal_k^{\itr{p}} = \avgmat{k, k} \avgscal_k^{\itr{p-1}} + \sum_{i \in \neighbor{k}}
\avgmat{i, k} \avgscal_i^{\itr{p-1}}
\vspace{-2mm}
\label{eq:ac-itr}
\end{equation} 
where $\avgmat{i, k}$ is the weighting factor
associated with the communication link between node $i$ and node $k$,
which satisfies $\avgmat{i, k}=0$ when $i\notin \neighbor{k}$
\cite{Xiao2004}.
The AC iteration in Eq.~(\ref{eq:ac-itr}) is initialized with 
$\avgscal_k^{\itr{0}} = \avgscal_k$ for $k=1 \mdots K$.
For more details,
see \cite{Xiao2004}. 

Denote with $\avgmat{}$ the matrix whose entries are $[\avgmat{}]_{i,j}=\avgmat{i,j}$
for $i,j=1 \mdots K$
and let $\avgvec^{\itr{p}} = [\avgscal_1^{\itr{p}} \mdots \avgscal_K^{\itr{p}}]^T$,
then, the update iteration in Eq.~(\ref{eq:ac-itr}) 
can be expressed as
\vspace{-1mm}
\begin{equation}
\avgvec^{\itr{p}} = \avgmat{} \avgvec^{\itr{p-1}} = \avgmat{}^2 \avgvec^{\itr{p-2}} = 
   \cdots = \avgmat{}^p \avgvec^{\itr{0}}. 
\label{eq:avg-matrix-vector}
\vspace{-1mm} 
\end{equation}
Iteration (\ref{eq:avg-matrix-vector}) converges 
asymptotically (for $p \rightarrow \infty$)
to the vector of averages
$\avgscal \ones{K}$ if
and only if
\vspace{-2mm} 
\begin{equation}
\avgmat{}^{p} \rightarrow \frac{1}{K} \ones{K} \ones{K}^T.
\label{eq:ac-convergence}
\vspace{-1mm}
\end{equation}
Let the eigendecomposition of the matrix $\avgmat{}$ be
\vspace{-1mm}
\begin{equation}
 \avgmat{} \, [\avgmateigv_1 \mdots \avgmateigv_K] =
 [\avgmateigv_1 \mdots \avgmateigv_K] \, \diag{\avgmateige_1 \mdots \avgmateige_K},  
\label{eq:ac-eig}
\vspace{-1mm}
\end{equation}
where $\avgmateige_1 > \cdots > \avgmateige_K$.   
According to \cite{Xiao2004}, 
the matrix $\avgmat{}$ which satisfies the asymptotic convergence condition 
(\ref{eq:ac-convergence}) 
possesses the following properties:
\begin{enumerate}
  \item[P1:] The principle eigenvalue of the matrix $\avgmat{}$ is unique (single multiplicity) 
     and equals to one, i.e., $\avgmateige_1 =1$.
     The corresponding normalized principal eigenvector of the matrix
     $\avgmat{}$ is given by $\avgmateigv_1 = \frac{1}{\sqrt{K}}\ones{K}$.
 \item[P2:] The remaining eigenvalues of $\avgmat{}$ 
    are strictly less than $\avgmateige_1$ in magnitude.
\end{enumerate}
In the following, we assume that the weighting matrix
$\avgmat{}$ satisfies the convergence condition (\ref{eq:ac-convergence}),
which permits the use of properties P1 and P2 in our analysis in Sec.~\ref{sec:d-PM-analysis}.
We express the decentralized estimate
of the average $\mean{\avgscal}$ 
at the $k$th node
using $p$ AC iterations  
as
\vspace{-1mm}
\begin{equation}
\dist{\avgscal}_{\cpy{k}} \eq 
[\avgmat{}^{p} \avgvec^{\itr{0}}]_{k},
\label{eq:notation-ac}   
\vspace{-1mm}
\end{equation}   
where $[\avgmat{}^{p} \avgvec^{\itr{0}}]_{k}$
denotes the $k$th entry of the vector $\avgmat{}^{p} \avgvec^{\itr{0}}$.
The notation $\dist{\avgscal}_{\cpy{k}}$ is used 
since the corresponding average is computed using the AC protocol
and every node stores locally the computed average.

\subsection{The Decentralized Power Method}
\label{sec:dpm}
In this section,
we first review the conventional (centralized) PM \cite[p.~450]{golub2012matrix},
then we review the d-PM \cite{Scaglione2008}.
 
The conventional PM
is an iterative algorithm which can be used to compute the eigendecomposition
of the sample covariance matrix $\smcov$.
Let us assume that $l-1$ eigenvectors of $\smcov$ 
have been computed using the PM,
then the $l$th eigenvector
is computed using the iteration
\vspace{-1mm}
\begin{equation}
	\smcoveigv_l^{\itr{q}}    = (\id_M - \smcoveigvs_{l-1} \smcoveigvs_{l-1}^H) \smcov \, \smcoveigv_l^{\itr{q-1}},
\label{eq:pm-m-th-eigenvector}
\vspace{-1mm}
\end{equation} 
where $\smcoveigv_l^{\itr{q}}$ denotes the $l$th eigenvector of $\smcov$ at the $q$th PM iteration,
$\id\!_M$ is the $M \times M$ identity matrix
and $\smcoveigvs_{l-1} \eq [\smcoveigv_1 \mdots \smcoveigv_{l-1}]$ is the concatenation of the $l-1$
previously computed eigenvectors of $\smcov$.
The vector $\smcoveigv_l^{\itr{0}}$ is a random initial value.
If the PM is carried out for a sufficiently large number of PM iterations $\itrpm$,
then, the normalized vector 
\vspace{-1mm}    
\begin{equation}
\smcoveigv_l = \smcoveigv_l^{\itr{\itrpm}} / \|\smcoveigv_l^{\itr{\itrpm}}\|
\label{eq:pm-normalization}
\vspace{-1mm}
\end{equation}
is the $l$th eigenvector of the matrix $\smcov$.

The d-PM \cite{Scaglione2008}
performs the computations in Eq.~(\ref{eq:pm-m-th-eigenvector})  
and Eq.~(\ref{eq:pm-normalization})
in a fully decentralized fashion based on the AC protocol.
The key idea of the d-PM is to partition the $l$th vector
as
$
\dsmcoveigv_l^{\itr{q}} \eq [\dsmcoveigv_{l,1}^{\itr{q}T} \mdots \dsmcoveigv_{l,K}^{\itr{q}T}]^T
$,   
where the $k$th node stores and updates only
the $k$th part, 
$\dsmcoveigv^{\itr{q}}_{l,k} \in \complex^{M_k \times 1}$,
of the vector $\dsmcoveigv_l^{\itr{q}}$. 
Note that the notation $\dsmcoveigv_{l, k}$ is used,
since the vector $\dsmcoveigv_{l, k}$
is computed using the AC protocol and stored only at the $k$th node.     
In the d-PM, iteration (\ref{eq:pm-m-th-eigenvector}) is split into two steps.
In the first step, the intermediate vector 
\begin{equation}
\begin{aligned}
	\dsmcoveigv_l^{\prime\itr{q}} \eq \smcov \, \dsmcoveigv_l^{\itr{q-1}}
\end{aligned} 
\label{eq:pm-cent-itr}
\end{equation}
is calculated.
In the second step, the vector $\dsmcoveigv_l^{\itr{q}}$ is updated as  
\begin{equation}
	\dsmcoveigv_l^{\itr{q}}    =  \dsmcoveigv_l^{\prime\itr{q}} - \dsmcoveigvs_{l-1}
	\dsmcoveigvs_{l-1}^H \dsmcoveigv_l^{\prime\itr{q}},
\label{eq:dpm-m-th-eigenvector}
\end{equation} 
where $\dsmcoveigvs_{l-1} \eq [\dsmcoveigv_1 \mdots \dsmcoveigv_{l-1}]$.
In the following,
we review how both Steps 
(\ref{eq:pm-cent-itr}) and (\ref{eq:dpm-m-th-eigenvector}) 
and the normalization Step (\ref{eq:pm-normalization}) 
can be carried out in a fully decentralized fashion \cite{Scaglione2008}.

Substituting Eq.~(\ref{eq:def-smcov}) into Eq.~(\ref{eq:pm-cent-itr}),
yields
\begin{equation} 
	\dsmcoveigv_l^{\prime\itr{q}} 
	   =  \frac{1}{N} \sum_{t=1}^{N}  \msr(t) \dsmsrscal^{\itr{q}}_{t, l},
\label{eq:pm-cent-itr-detail} 
\end{equation}  
where $\dsmsrscal^{\itr{q}}_{t, l} \! \eq \!  \msr^H \!(t) \dsmcoveigv_l^{\itr{q-1}}$.
Note that
$
\dsmsrscal^{\itr{q}}_{t,l} \! 
= \! \sum_{k=1}^{K} \msr_k^H \!(t) \dsmcoveigv_{l,k}^{\itr{q-1}}
$,
where $\msr_k^H(t) \dsmcoveigv_{l,k}$ is computed and stored locally at the $k$th node.
Thus, in analogy to (\ref{eq:notation-ac}),
the estimate of $\dsmsrscal^{\itr{q}}_{t, l}$ at the $k$th node 
computed using the AC protocol is  
\begin{equation}
	\dsmsrscal^{\itr{q}}_{t, l, \cpy{k}}  = K 
	\left[  
	   \avgmat{}^{\itrac} \,
	   [\msr_1^H(t)  \dsmcoveigv_{l,1}^{\itr{q-1}} \mdots \msr_K^H(t)  \dsmcoveigv_{l,K}^{\itr{q-1}}]^T 
    \right]_{k},
\label{eq:dpm-mvm-details}
\end{equation}
where $\itrac$ is the number of AC iterations used in this protocol
\cite{Scaglione2008}.
Using $N$ parallel instances of the AC protocol,
the $k$th node 
will locally maintain the scalars 
$\{\dsmsrscal^{\itr{q}}_{t, l, \cpy{k}}\}_{t=1}^N$.
Thus, each node $k$ can locally compute 
one part of the vector $\dsmcoveigv_l^{\prime\itr{q}}$
as 
$\dsmcoveigv_{l,k}^{\prime\itr{q}} = \frac{1}{N} \sum_{t=1}^{N} \msr_k(t)
\dsmsrscal^{\itr{q}}_{t,l, \cpy{k}}$,
that in turn perform  
the first step of the d-PM iteration 
described in Eq.~(\ref{eq:pm-cent-itr}).

Note that in the second step of the d-PM iteration 
only the second term of Eq.~(\ref{eq:dpm-m-th-eigenvector}) 
has to be computed in a decentralized fashion
\cite{Scaglione2008}.
This term can be written as
$\dsmcoveigvs_{l-1} \dsmcoveigvs_{l-1}^H \dsmcoveigv_l^{\prime\itr{q}} = 
	\sum_{i=1}^{l-1} 
	\dsmcoveigv_i^H \devev^{\prime\itr{q}}_{i,l}$,
where $\devev^{\prime\itr{q}}_{i,l} \eq \dsmcoveigv_i^H
\dsmcoveigv_{l}^{\prime\itr{q}}$.
In analogy to (\ref{eq:pm-cent-itr-detail}),
each node can locally compute its corresponding part of 
$\dsmcoveigvs_{l-1} \dsmcoveigvs_{l-1}^H \dsmcoveigv_l^{\prime\itr{q}}$ 
once the scalars $\{ \devev^{\prime\itr{q}}_{i,l} \}_{i=1}^{l-1}$
are available at every node. 
This can be achieved using $l-1$ parallel instances of 
the AC protocol as
\begin{equation}
\devev^{\prime\itr{q}}_{i,l, \cpy{k}} =
 K  
   \left[  
	   \avgmat{}^{\itract} \,
	   [ 
	     \dsmcoveigv_{i, 1}^H \, \dsmcoveigv_{l, 1}^{\prime\itr{q}} 
	     \mdots 
	     \dsmcoveigv_{i, K}^H \, \dsmcoveigv_{l, K}^{\prime\itr{q}}
	   ]^T 
    \right]_{k},
\label{eq:pm-projection-ac} 	 
\end{equation}
where 
$\devev^{\prime\itr{q}}_{i,l, \cpy{k}}$ 
is the $i$th scalar computed at the $k$th node
and
$\itract$ is the number of AC iterations 
used in these $l\!-\!1$ AC protocol instances.
Thus achieving the second step of the d-PM iteration.

After a sufficiently large number of PM iterations $\itrpm$,  
the vector $\dsmcoveigv_l^{\itr{\itrpm}}$  
is normalized as in Eq.~(\ref{eq:pm-normalization}). 
This normalization can be carried out locally once the norm
$\| \dsmcoveigv_l^{\itr{\itrpm}} \|$ is available at each node 
which is achieved using the AC protocol as  
\begin{equation}
\| \dsmcoveigv_l^{\itr{\itrpm}} \|^2_{\cpy{k}} \!=\!
 K \!
	\left[  
	   \avgmat{}^{\itractt} \,
	   [  \dsmcoveigv_{l, 1}^{\itr{\itrpm} H} \, \dsmcoveigv_{l, 1}^{\itr{\itrpm}}
	      \mdots 
	      \dsmcoveigv_{l, K}^{\itr{\itrpm} H} \, \dsmcoveigv_{l, K}^{\itr{\itrpm}}
	   ]^T 
    \right]_{k},
\label{eq:pm-normalization-ac}
\end{equation}
where 
$\itractt$ is the number of iterations used in the AC protocol instance.
Thus, using the d-PM,
the eigendecomposition of the sample covariance matrix can be calculated
without FC.

In the d-PM, communication between the nodes is required 
to compute the scalars in
(\ref{eq:dpm-mvm-details}), 
(\ref{eq:pm-projection-ac}) 
and (\ref{eq:pm-normalization-ac}).
From a signaling perspective, 
the first and most expensive computation 
is that of the $N$ scalars in Eq.~(\ref{eq:dpm-mvm-details}), 
where $N$ AC protocol instances,
i.e., $\itrac N$ AC iterations, are carried out to compute these scalars.
The second most expensive computation lies in Eq.~(\ref{eq:pm-projection-ac})
which requires $l-1$ AC protocol instances.
The third and least expensive computation is 
the normalization of the eigenvectors
which requires only one AC protocol instance.

\section{Performance Analysis of The Decentralized Power Method}
\label{sec:d-PM-analysis}
In this section, we first
reformulate the d-PM as an equivalent centralized PM.
Based on the centralized formulation,
we derive an asymptotic analytical expression 
of the second order statistics of the eigenvectors
for the sample covariance matrix obtained from the d-PM.
Moreover, we show that 
the d-PM is not a consistent estimator 
of the eigenvectors of the true 
measurement covariance matrix $\mcov$.  

\subsection{Assumptions}
Our performance analysis focuses on the errors
resulting from using a finite number of AC iterations $\itrac < \infty$
to compute the scalars  
$\{ \dsmsrscal^{\itr{q}}_{t, l, \cpy{k}} \}_{t=1}^N$
in Eq.~(\ref{eq:dpm-mvm-details}),
because, from a signaling perspective, 
this step represents the most expensive calculation in the d-PM.
Thus, the following assumptions are made:
\begin{itemize}
  \item [A1:] 
  The number of AC iterations $\itract$ and $\itractt$ used 
  to compute the scalars in (\ref{eq:pm-projection-ac})
  and the normalization factors
 in (\ref{eq:pm-normalization-ac}), respectively,  
 are large compared to the number of AC iterations used 
 to compute the scalars $\{ \dsmsrscal^{\itr{q}}_{t, l, \cpy{k}} \}_{t=1}^N$,
 i.e., $\itract \gg  \itrac$ and $\itractt \gg  \itrac$. 
Thus, errors resulting from the finite number of AC iterations 
in Eq.~(\ref{eq:pm-normalization-ac}) and Eq.~(\ref{eq:pm-projection-ac}) 
are negligible compared to 
those in Eq.~(\ref{eq:dpm-mvm-details}).
  \item [A2:] The number of PM iterations $\itrpm$ is sufficiently large such that the errors resulting from the finite number of 
  PM iterations are negligible.
\end{itemize} 

\subsection{Error Expressions for the Decentralized Power Method}

The decentralized eigendecomposition of the sample covariance matrix using
the d-PM yields the vectors 
$\{ \dsmcoveigv_l  \}_{l=1}^M$.
Since under Assumptions A1 and A2 
these vectors depend on $\itrac$
and not on $\itract$, $\itractt$ and $\itrpm$,
we denote them as $\{ \pdsmcoveigv{l}  \}_{l=1}^M$.   
Due to finite AC iteration effects ($\itrac < \infty$), 
these vectors do not exactly correspond to 
the eigenvectors of the matrix $\smcov$.
The following theorem provides further insights 
into the properties of the vectors 
$\{\pdsmcoveigv{l} \}_{l=1}^M$.  
\begin{thm}
Under Assumption A1,
the vectors $\{\pdsmcoveigv{l} \}_{l=1}^M$
are the eigenvectors of the matrix
\begin{equation}
\pdsmcov \triangleq K \left( \sensorsel \avgmat{}^{\itrac}  \sensorsel^T \right) \odot \smcov,
\label{eq:def-rtilde}
\end{equation}
where  
$\sensorsel$ is the sensor selection matrix 
defined in Eq.~(\ref{eq:def-t})
and $\smcov$ is the centralized sample covariance matrix 
defined in Eq.~(\ref{eq:def-smcov}).
\label{thm:pm-covariance}
\end{thm}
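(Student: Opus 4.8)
The plan is to show that, under Assumption A1, one full d-PM iteration is algebraically identical to one centralized PM iteration in which the sample covariance matrix $\smcov$ is replaced by $\pdsmcov$. Granting this equivalence, the theorem is immediate: the centralized PM run on $\pdsmcov$ for sufficiently many iterations (Assumption A2) returns the eigenvectors of $\pdsmcov$, and these are by construction the vectors $\{\pdsmcoveigv{l}\}_{l=1}^M$.

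First I would localize the only source of discrepancy between the d-PM and an exact centralized PM. Under Assumption A1 the AC instances in Eq.~(\ref{eq:pm-projection-ac}) and Eq.~(\ref{eq:pm-normalization-ac}) are carried out with $\itract,\itractt \gg \itrac$ and may be treated as exact; consequently the deflation in Eq.~(\ref{eq:dpm-m-th-eigenvector}) and the normalization in Eq.~(\ref{eq:pm-normalization}) operate exactly as in the centralized PM. The sole modification therefore resides in the matrix--vector product of Eq.~(\ref{eq:pm-cent-itr}), which the d-PM realizes through the finite-$\itrac$ AC protocol of Eq.~(\ref{eq:dpm-mvm-details}).

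The key step is to evaluate precisely what Eq.~(\ref{eq:pm-cent-itr}) computes in the d-PM. Substituting Eq.~(\ref{eq:dpm-mvm-details}) into the local update $\dsmcoveigv_{l,k}^{\prime\itr{q}} = \frac{1}{N}\sum_{t=1}^{N}\msr_k(t)\,\dsmsrscal^{\itr{q}}_{t,l,\cpy{k}}$ and interchanging the order of summation, I would obtain
\begin{equation}
\dsmcoveigv_{l,k}^{\prime\itr{q}} = K \sum_{j=1}^{K} [\avgmat{}^{\itrac}]_{k,j} \left( \frac{1}{N}\sum_{t=1}^{N} \msr_k(t)\msr_j^H(t) \right) \dsmcoveigv_{l,j}^{\itr{q-1}}.
\label{eq:proofplan-block}
\end{equation}
The inner average is exactly the $(k,j)$ block of $\smcov$, so block-wise the effective iteration matrix has $(k,j)$ block equal to $K[\avgmat{}^{\itrac}]_{k,j}$ times the $(k,j)$ block of $\smcov$. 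Because each row of $\sensorsel$ has a single nonzero entry, $[\sensorsel\avgmat{}^{\itrac}\sensorsel^T]_{i,i'} = [\avgmat{}^{\itrac}]_{k(i),k(i')}$, where $k(i)$ denotes the node owning sensor $i$; hence the Hadamard product $K(\sensorsel\avgmat{}^{\itrac}\sensorsel^T)\odot\smcov$ reproduces exactly this block structure and equals $\pdsmcov$ of Eq.~(\ref{eq:def-rtilde}).

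Finally I would assemble the pieces. Since $\avgmat{}$ is symmetric and $\smcov$ is Hermitian, $\pdsmcov$ is Hermitian and the centralized PM is well defined on it. Arguing by induction on $l$, the first $l-1$ output vectors are the dominant $l-1$ eigenvectors of $\pdsmcov$, the exact deflation then isolates the $l$th dominant direction, and Assumption A2 ensures convergence; thus $\{\pdsmcoveigv{l}\}_{l=1}^M$ are the eigenvectors of $\pdsmcov$. I expect the main obstacle to be the bookkeeping in Eq.~(\ref{eq:proofplan-block}): one must track the node-to-sensor indexing through $\sensorsel$ and confirm that the explicit factor $K$ in Eq.~(\ref{eq:dpm-mvm-details})---which converts the AC average into the sum over nodes required by the centralized product---matches the factor $K$ in the definition of $\pdsmcov$.
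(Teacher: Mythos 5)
Your proof is correct, and its overall architecture --- localize the only finite-$\itrac$ error to the matrix--vector product under Assumption A1, show that one d-PM iteration is exactly one centralized PM iteration applied to $\pdsmcov$, then induct on $l$ with exact deflation and normalization and invoke A2 for convergence --- is the same as the paper's. Where you genuinely differ is in how the key identity $\dsmcoveigv_l^{\prime\itr{q}} = \pdsmcov\,\dsmcoveigv_l^{\itr{q-1}}$ is established. The paper introduces the block-diagonal data matrix $\msrs(t)$, expands $\avgmat{}^{\itrac}=\sum_{k=1}^{K}\avgmateige_k^{\itrac}\avgmateigv_k\avgmateigv_k^H$, and uses the identity $\msrs(t)\avgmateigv_k=(\sensorsel\avgmateigv_k)\odot\msr(t)$ together with rank-one Hadamard-product manipulations to reassemble $K(\sensorsel\avgmat{}^{\itrac}\sensorsel^T)\odot\smcov$. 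You instead substitute Eq.~(\ref{eq:dpm-mvm-details}) into the local update, interchange the sum over snapshots with the sum over nodes, read off the $(k,j)$ block of the effective iteration matrix as $K[\avgmat{}^{\itrac}]_{k,j}$ times the $(k,j)$ block of $\smcov$, and verify entrywise via $[\sensorsel\avgmat{}^{\itrac}\sensorsel^T]_{i,i'}=[\avgmat{}^{\itrac}]_{k(i),k(i')}$ that this block structure is precisely the Hadamard form in Eq.~(\ref{eq:def-rtilde}); I checked this computation and it is sound, including the bookkeeping of the factor $K$ that converts the AC average into the sum over nodes. Your route is more elementary, since it needs neither the eigendecomposition of $\avgmat{}$ nor the rank-one Hadamard property. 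What the paper's spectral detour buys is reuse: the matrices $\vamthree_k=\diag{\sensorsel\avgmateigv_k}$ and the expansion $\pdsmcov=\smcov+K\sum_{k=2}^{K}\avgmateige_k^{\itrac}\vamthree_k\,\smcov\,\vamthree_k^H$ introduced there are exactly the objects on which the perturbation analysis of Theorem~\ref{thm:delta-eigenvector} is built, so the paper pays the spectral cost once and amortizes it across both proofs. A small point in your favor: your explicit observation that $\pdsmcov$ is Hermitian (given symmetric $\avgmat{}$ and Hermitian $\smcov$), which is what legitimizes the orthogonal deflation producing genuine eigenvectors, is left implicit in the paper --- and note that symmetry of $\avgmat{}$ is itself only implicit there, through the expansion $\avgmat{}^{\itrac}=\sum_k\avgmateige_k^{\itrac}\avgmateigv_k\avgmateigv_k^H$, so you are not assuming more than the paper already does.
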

\begin{proof}
We prove Theorem~\ref{thm:pm-covariance} by induction.
Thus, first we prove that the vector $\pdsmcoveigv{1}$,
which is computed using the d-PM,
is the principle eigenvector of the matrix $\pdsmcov$.
Then, assuming that the vectors $\{\pdsmcoveigv{i}\}_{i=1}^{l-1}$
are the principle $l-1$ eigenvectors of the matrix $\pdsmcov$,
we prove that 
$\pdsmcoveigv{l}$ is the $l$th eigenvector of the matrix
$\pdsmcov$.
For convenience, we drop the dependency on $P$ from  $\pdsmcov$ and
$\pdsmcoveigv{l}$, throughout the derivations. 

Note that when the d-PM is used to compute the vector $\dsmcoveigv_1$,
then Eq.~(\ref{eq:dpm-m-th-eigenvector}) reduces to
\begin{equation}
\dsmcoveigv_{1}^{\itr{q}} 
=  
\frac{1}{N} \sum_{t=1}^{N} 
   \left[ 
     \dsmsrscal^{\itr{q}}_{t, 1, \cpy{1}} \msr^T_1(t) 
     \mdots
      \dsmsrscal^{\itr{q}}_{t, 1, \cpy{K}} \msr^T_K(t)
   \right]^T,
\label{eq:proof1-1}
\end{equation}
since the matrix $\dsmcoveigvs_{0} = \zeros$.
Let 
\begin{equation*}
\msrs(t) \triangleq \left( \begin{matrix}
\msr_1(t)  		&  \zeros_{M_1}    	& \cdots 		& \zeros_{M_1}			\\ 
\zeros_{M_2}    &  \msr_2(t) 		& \cdots 		& \vdots				\\
\vdots     		&  \vdots    		& \ddots 		& \zeros_{M_{K-1}}		\\ 
\zeros_{M_K} 	&  \zeros_{M_K}		& \cdots 		& \msr_K(t)		
\end{matrix} \right),
\end{equation*}
where $\msr_k(t)$ is defined in Eq.~(\ref{eq:measurements}).
Then, Eq.~(\ref{eq:proof1-1}) is written as
\begin{equation}
\begin{aligned}
	\dsmcoveigv_{1}^{\itr{q}} &= \left( \frac{K}{N}\sum_{t=1}^{N} \msrs(t) 
	 \avgmat{}^{\itrac} \msrs^H(t) \right) \dsmcoveigv_1^{\itr{q-1}} \\
	 & =  \left( \frac{K}{N}\sum_{t=1}^{N} \sum_{k=1}^K 
	\avgmateige_k^{\itrac} \msrs(t) \avgmateigv_k \avgmateigv^H_k \msrs^H(t)
	\right) \dsmcoveigv_1^{\itr{q-1}},
\end{aligned} 
\label{eq:theory1-1}
\end{equation}
where the eigendecomposition of the matrix $\avgmat{}$ in Eq.~(\ref{eq:ac-eig})  is used,
$\sensorsel$ is the sensor selection matrix defined in Eq.~(\ref{eq:def-t}) and 
$\itrac$ is the number of AC iterations used to compute the scalars $\{\dsmsrscal^{\itr{q}}_{t, 1, \cpy{k}} \}_{t=1}^N$.
The product $\msrs(t) \avgmateigv_k$ can be rewritten as
\begin{equation}
\begin{aligned}
\msrs(t) \avgmateigv_k &= \left[
  \avgmateigvscal_{k, 1} \msr^T_{1}(t) \mdots \avgmateigvscal_{k, K} \msr^T_{K}(t) 
\right]^T \\
  &= \left( \sensorsel \avgmateigv_k \right) \odot \msr(t)
\end{aligned}
\label{eq:theory1-1-product}
\end{equation}
where $\avgmateigv_k = \left[ \avgmateigvscal_{k, 1} \mdots \avgmateigvscal_{k, K}\right]^T$.
Substituting Eq.~(\ref{eq:theory1-1-product}) into Eq.~(\ref{eq:theory1-1}), yields
\begin{equation*}
\begin{aligned} 
	\dsmcoveigv_1^{\itr{q}} \!\!\!&=\!\!
	\left(\! \frac{K}{N}\!\sum_{t=1}^{N} \sum_{k=1}^K \! \avgmateige_k^{\itrac}\!
	\left( \left( \sensorsel \avgmateigv_k \right) \! \odot \! \msr(t) \right)
	\left( \left( \sensorsel \avgmateigv_k \right) \! \odot \! \msr(t) \right)^H
	\right) \dsmcoveigv_1^{\itr{q-1}} \\   
	&=\!\! \left( \frac{K}{N}\sum_{t=1}^{N} \sum_{k=1}^K \avgmateige_k^{\itrac}
	 \left( \sensorsel
	 \avgmateigv_k \avgmateigv^H_k  \sensorsel^T \right)
	\!\odot\!
	\left( \msr(t) \msr^H(t) \right)
	 \right)
	 \dsmcoveigv_1^{\itr{q-1}} \\
	 &=\!\! \left( \! K \! \sum_{k=1}^K \! \avgmateige_k^{\itrac} \!
	 \left( \sensorsel
	 \avgmateigv_k \avgmateigv^H_k  \sensorsel^T \right)
	\!\odot \!
	\frac{1}{N}
	\sum_{t=1}^{N}
	\left( \msr(t) \msr^H(t) \right)
	 \!\! \right)
	 \dsmcoveigv_1^{\itr{q-1}} \\
	&=\!\! \left( K \left( \sensorsel \avgmat{}^{\itrac}  \sensorsel^T \right) \odot
	 \smcov \right)  \dsmcoveigv_1^{\itr{q-1}}.
\end{aligned}
\end{equation*}
Thus, the decentralized computation of $\dsmcoveigv_1$ using the d-PM 
can be written as the following iteration    
\begin{equation}
\dsmcoveigv_1^{\itr{q}} = \dsmcov \; \dsmcoveigv_1^{\itr{q-1}},
\label{eq:theorem-dpm-cent-itr-1}
\end{equation}
where $\dsmcov$ is defined in Eq.~(\ref{eq:def-rtilde}).
Note that Eq.~(\ref{eq:theorem-dpm-cent-itr-1})
corresponds to the update procedure of the conventional PM 
applied to the matrix $\dsmcov$.
Thus, after a sufficiently large number of PM iterations $\itrpm$,
the resulting vector $\dsmcoveigv_1^{\itr{\itrpm}}$
converges (if normalized)
to the principle eigenvector of the matrix $\dsmcov$.
It follows from Assumption A1 that the decentralized normalization of 
$\dsmcoveigv_1^{\itr{\itrpm}}$
is accurate.
Thus, under Assumption A1, 
the vector resulting from applying the d-PM to the sample covariance matrix $\smcov$
is the principle eigenvector of the matrix $\dsmcov$ computed using the conventional PM.
This concludes the first part of the induction.

For the second part of the induction,
we assume that the vectors $\{\dsmcoveigv_i\}_{i=1}^{l-1}$
are computed using the d-PM and they are the first $l-1$ eigenvectors of the matrix $\dsmcov$.
Then, we prove the induction for the vector $\dsmcoveigv_l$.

The computation of the vector $\dsmcoveigv_l$ using the d-PM
is achieved as follows.
First, the vector $\dsmcoveigv_l^{\prime\itr{q}}$,
which is defined in Eq.~(\ref{eq:pm-cent-itr})
is computed in a decentralized fashion.
In analogy to Eq.~(\ref{eq:proof1-1}),
$\dsmcoveigv_l^{\prime\itr{q}}$ can be rewritten as
\begin{equation}
\dsmcoveigv_l^{\prime\itr{q}}=\dsmcov \, \dsmcoveigv_l^{\itr{q-1}}.
\label{eq:thm1-eq}
\end{equation}
Second the scalars
$\{\devev^{\prime\itr{q}}_{i,l}\}_{i=1}^{l-1}$
are computed in a decentralized fashion.
Since under Assumption A1 the AC errors
resulting from this computation are negligible,
the decentralized iteration used to compute the vector $\dsmcoveigv_l$ is reduced to
\begin{equation}
\begin{aligned}
	\dsmcoveigv_l^{\itr{q}}  &  = (\id_M - \dsmcoveigvs_{l-1} \dsmcoveigvs_{l-1}^H) \dsmcov \, \dsmcoveigv_l^{\itr{q-1}}.
\end{aligned}
\label{eq:theorem-dpm-mth-eig}
\end{equation}   
Note that Eq.~(\ref{eq:theorem-dpm-mth-eig}) is equivalent to
the conventional PM iteration (\ref{eq:pm-m-th-eigenvector}) applied to compute the $l$th eigenvector of the matrix $\dsmcov$.
After $\itrpm$ iterations of the d-PM, the resulting vector $\dsmcoveigv_l^{\itr{\itrpm}}$
is normalized. 
Again under Assumption A1, the normalization 
is accurate,
thus, the resulting normalized vector $\dsmcoveigv_l$ is the $l$th eigenvector of the matrix $\dsmcov$
computed using the conventional PM.
\end{proof}

Theorem~\ref{thm:pm-covariance} 
shows that,
when the d-PM is used with a finite number of samples $N$
and a finite number of AC iterations $\itrac$ 
to estimate the eigenvectors $\{ \mcoveigv_l\}_{l=1}^M$  of the true covariance matrix $\mcov$, 
then three different types of errors occur:
\begin{enumerate}
\item [E1:] Errors resulting from the finite number of AC iterations $\itrac$. 
     These errors are expressed in the matrix 
$\left( \sensorsel \avgmat{}^{\itrac}  \sensorsel^T \right)$. 
\item [E2:] Errors resulting from the finite number of samples $N$. 
     These errors are expressed in $\smcov$.
\item [E3:] Errors resulting from the finite number of PM iterations $\itrpm$,
which we neglect as stated in Assumption A2.
\end{enumerate}
Since the averaging matrix $\avgmat{}$ is assumed to satisfy the convergence condition
(\ref{eq:ac-convergence}), 
we conclude that
$K \sensorsel \avgmat{}^{\itrac}  \sensorsel^T \rightarrow \ones{M}\ones{M}$
as
$\itrac \rightarrow \infty$,
consequently, $\pdsmcov \rightarrow \smcov$.
We remark that
$\pdsmcov \rightarrow K \left( \sensorsel \avgmat{}^{\itrac}  \sensorsel^T \right) \odot \mcov$
when $N \rightarrow \infty $,
i.e.,
for a finite number of AC iterations $\itrac$,
the eigendecomposition of the sample covariance matrix
using the d-PM is not an asymptotically consistent estimator of 
the eigenvectors of the true measurement covariance matrix $\mcov$.

Theorem~\ref{thm:pm-covariance} simplifies the performance analysis of the d-PM,
since it provides 
a link to an equivalent centralized algorithm formulation
which can be analyzed using 
the conventional statistical analysis techniques and results \cite{brillinger2001time}.
In the sequel, we start our performance analysis by introducing 
the error vectors which represent E1 and E2 types of errors.
Then, we compute analytical expressions for these errors
and 
finally we derive the second order statistics of the eigenvectors 
obtained from the d-PM.    
 
For the centralized eigendecomposition, 
the sample estimate of the $l$th eigenvector $\smcoveigv_l$ of the
true covariance matrix $\mcov$ is expressed as
\begin{equation}
\smcoveigv_l \eq \mcoveigv_l + \error \mcoveigv_l,
\label{eq:error-eigenvector-cent}
\end{equation}
where the error vector $\error \mcoveigv_l$ accounts only for 
the finite sample effects,
i.e., E2 type of errors, used in the computation of the sample covariance matrix.
The decentralized estimate of the $l$th eigenvector is expressed as
\begin{equation}
\pdsmcoveigv{l} \eq \mcoveigv_l + \error \pdsmcoveigv{l},
\label{eq:error-eigenvector-dist}
\end{equation}
where the error vector $\error \pdsmcoveigv{l}$ accounts for 
errors resulting from the finite number of samples and 
the finite number of AC iterations,
i.e., E1 and E2 type of  errors.
Similarly, we define
\begin{equation}
\smcov \eq \mcov + \error \mcov
\label{eq:error-mcov-cent}
\end{equation}
and
\begin{equation}
\pdsmcov \eq \mcov + \error \pdsmcov,
\label{eq:error-mcov-dist}
\end{equation}
where $\error \mcov$ accounts only for E2 type of errors and 
$\error \pdsmcov$ accounts for both E1 and E2 types of errors.
Using the aforementioned notation,
the second order statistics of the eigenvectors computed using the d-PM
are expressed as
$\sosone$ 
and
$\sostwo$.

The following theorem simplifies the computation of 
$\sosone$
and
$\sostwo$
by introducing a simple expression for $\error \pdsmcoveigv{l}$.    

\begin{thm}
Under Assumptions A1 and A2,
the error vector $\error \pdsmcoveigv{l}$ 
is given by
$$
\error \pdsmcoveigv{l} = -\vamone_l \left( \error \mcov \, \mcoveigv_l +
\pvavone{l} \right) $$
where $\vamone_l$
is defined in Eq.~(\ref{eq:def-varmone}),
\begin{equation}
\pvavone{l} = K \sum_{k=2}^{K} \avgmateige_k^{\itrac} \, 
	      \vamthree_k \mcov \vamthree_k^H \mcoveigv_l,
\label{eq:def-vavone}
\end{equation}
$\vamthree_k=\diag{\sensorsel \pmb{\beta}_k }$,
$\sensorsel$ is defined in Eq.~(\ref{eq:def-t})
and $\avgmateigv_k$ and $\avgmateige_k$ are defined in Eq.~(\ref{eq:ac-eig}).

\label{thm:delta-eigenvector}
\end{thm}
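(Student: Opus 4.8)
The plan is to combine Theorem~\ref{thm:pm-covariance}, which identifies $\pdsmcoveigv{l}$ as the $l$th eigenvector of the Hermitian matrix $\pdsmcov = \mcov + \error\pdsmcov$, with the classical first-order eigenvector perturbation formula, and then to reduce the perturbation $\error\pdsmcov\,\mcoveigv_l$ to the two interpretable error contributions appearing in the claim.

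First I would establish the perturbation identity. Since $\mcov$ has simple eigenvalues $\mcoveige_1 > \cdots > \mcoveige_M$ with eigenvectors $\mcoveigv_m$, and since $\pdsmcov$ is Hermitian (being the Hadamard product of the Hermitian matrices $K\sensorsel\avgmat{}^{\itrac}\sensorsel^T$ and $\smcov$), a standard first-order analysis of $\pdsmcoveigv{l} = \mcoveigv_l + \error\pdsmcoveigv{l}$ gives $\error\pdsmcoveigv{l} = \sum_{m\neq l}\frac{\mcoveigv_m^H\,\error\pdsmcov\,\mcoveigv_l}{\mcoveige_l-\mcoveige_m}\mcoveigv_m$ up to terms of second order in the errors, the component along $\mcoveigv_l$ being of second order by the normalization $\|\pdsmcoveigv{l}\|=1$. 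Recognizing from Eq.~(\ref{eq:def-varmone}) that $\vamone_l = \vamfour_{-l}\vamtwo_l^{-1}\vamfour_{-l}^H = \sum_{m\neq l}(\mcoveige_m-\mcoveige_l)^{-1}\mcoveigv_m\mcoveigv_m^H$, this is exactly $\error\pdsmcoveigv{l} = -\vamone_l\,\error\pdsmcov\,\mcoveigv_l$, so it remains to show $\error\pdsmcov\,\mcoveigv_l = \error\mcov\,\mcoveigv_l + \pvavone{l}$.

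Next I would expand $\error\pdsmcov$ using the eigendecomposition $\avgmat{}^{\itrac} = \sum_{k=1}^K\avgmateige_k^{\itrac}\avgmateigv_k\avgmateigv_k^H$ from Eq.~(\ref{eq:ac-eig}). Property P1 ($\avgmateige_1=1$, $\avgmateigv_1 = \ones{K}/\sqrt{K}$) together with $\sensorsel\ones{K} = \ones{M}$ isolates the $k=1$ term as $K\avgmateige_1^{\itrac}(\sensorsel\avgmateigv_1)(\sensorsel\avgmateigv_1)^H = \ones{M}\ones{M}^T$, so that $K\sensorsel\avgmat{}^{\itrac}\sensorsel^T = \ones{M}\ones{M}^T + K\sum_{k=2}^K\avgmateige_k^{\itrac}(\sensorsel\avgmateigv_k)(\sensorsel\avgmateigv_k)^H$. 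The key algebraic step is the Hadamard identity $(aa^H)\odot\smcov = \diag{a}\,\smcov\,\diag{a}^H$, which for $a = \sensorsel\avgmateigv_k$ produces $\vamthree_k\smcov\vamthree_k^H$ with $\vamthree_k = \diag{\sensorsel\avgmateigv_k}$, while $\ones{M}\ones{M}^T\odot\smcov = \smcov$. Hence $\pdsmcov = \smcov + K\sum_{k=2}^K\avgmateige_k^{\itrac}\vamthree_k\smcov\vamthree_k^H$, and subtracting $\mcov$ and invoking Eq.~(\ref{eq:error-mcov-cent}) gives $\error\pdsmcov = \error\mcov + K\sum_{k=2}^K\avgmateige_k^{\itrac}\vamthree_k\smcov\vamthree_k^H$. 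Applying this to $\mcoveigv_l$ and replacing $\smcov$ by $\mcov$ inside the sum yields $\error\pdsmcov\,\mcoveigv_l = \error\mcov\,\mcoveigv_l + \pvavone{l}$, which combined with the perturbation identity produces the claimed expression.

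The main obstacle is the consistent first-order bookkeeping across two independent error sources: the finite-sample error $\error\mcov$ (E2), which vanishes as $N\to\infty$, and the finite-AC-iteration bias carried by $\avgmateige_k^{\itrac}$ for $k\geq 2$ (E1), which vanishes as $\itrac\to\infty$ by Property P2. Both the eigenvector expansion and the replacement of $\smcov$ by $\mcov$ in the E1 sum discard the same kind of cross term $\avgmateige_k^{\itrac}\,\vamthree_k\,\error\mcov\,\vamthree_k^H\mcoveigv_l$, a product of one E1 and one E2 factor that is therefore of second order; arguing that such terms are jointly negligible in the regime where both $N$ and $\itrac$ are large is the delicate part of the proof.
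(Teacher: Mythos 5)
Your proposal is correct and follows essentially the same route as the paper's proof in Appendix~\ref{apx:theorem2}: the same eigendecomposition of $\avgmat{}^{\itrac}$ with Property P1 isolating the $k=1$ term, the same rank-one Hadamard identity $(aa^H)\odot\smcov=\diag{a}\,\smcov\,\diag{a}^H$ yielding $\pdsmcov=\smcov+K\sum_{k=2}^{K}\avgmateige_k^{\itrac}\vamthree_k\smcov\vamthree_k^H$, and the same discarding of the E1$\times$E2 cross terms. The only cosmetic difference is that you invoke the standard first-order eigenvector perturbation formula directly (noting $\vamone_l=\sum_{m\neq l}(\mcoveige_m-\mcoveige_l)^{-1}\mcoveigv_m\mcoveigv_m^H$ and that the $\mcoveigv_l$-component is second order by normalization), whereas the paper re-derives it by expanding $\pdsmcov\,\dsmcoveigv_l=\dsmcoveige_l\dsmcoveigv_l$ to first order and using $\vamone_l\mcoveigv_l=\zeros$ to annihilate the eigenvalue-error term.
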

\begin{proof}
See Appendix~\ref{apx:theorem2}.
\end{proof}

Note that in Theorem~\ref{thm:delta-eigenvector} the E1 type of errors are expressed in terms of 
the vector $\pvavone{l}$ which depends on the number of AC iterations $\itrac$ 
and on the eigenvalues and eigenvectors of the 
weighting matrix $\avgmat{}$, except for the principle eigenvalue and eigenvector.
Since the magnitude of $\avgmateige_k$ is strictly less than one
for $k=2 \mdots K$ (see Sec.~\ref{sec:ac}),
it follows from Eq.~(\ref{eq:def-vavone}) that
$\pvavone{l} \rightarrow \zeros$ as $\itrac \rightarrow \infty$,
i.e., the AC protocol is carried out for an infinitely large number of iterations.
Consequently, 
$\error \pdsmcoveigv{l}$ contains no E1 type of errors
when $\itrac \rightarrow \infty$.
In Theorem~\ref{thm:delta-eigenvector}, 
the E2 errors are expressed in terms of the matrix $\error \mcov$.
If an infinite number of sample is available,
i.e., $N \rightarrow\infty$ then $\error \mcov \rightarrow \zeros$.
Consequently, $\error \pdsmcoveigv{l} \rightarrow \zeros$ as both
$\itrac$ and $N$ tend to infinity.

Based on Theorem~\ref{thm:delta-eigenvector},
analytical expressions for 
$\sosone$
and
$\sostwo$
are introduced in the following theorem.
These expressions are useful for computing the MSE of estimators which
are based on the d-PM as we show later for the d-ESPRIT algorithm.
  
\begin{thm}
Under Assumptions A1 and A2  
$$
\begin{aligned}
\sosone
=& \dfrac{\mcoveige_l}{N}
\sum_{\underset{\scriptstyle{i\neq l}}{i=1}}^{M}
\dfrac{\mcoveige_i}{\left( \mcoveige_l - \mcoveige_i \right)^2} 
 \mcoveigv_i \mcoveigv_i^H 
 \chrod{l,m} \\
 & +
\vamone_l \pvavone{l} \pvavonet{m}{H} \vamone_m^H
\end{aligned}
$$
and
$$
\begin{aligned}
\sostwo =& 
\dfrac{\mcoveige_l \mcoveige_m}{N}
\dfrac{\mcoveigv_l \mcoveigv_m^T}{\left( \mcoveige_l -\mcoveige_m \right)^2}  (\chrod{l,m} - 1) 
\\ & +
\vamone_l \pvavone{l} \pvavonet{m}{T} \vamone_m^T
\end{aligned}
$$
where
$\chrod{l,m}$ is the Kronecker delta,
$N$ is the number of samples,
$\vamone_l$ is defined in Eq.~(\ref{eq:def-varmone})
and $\pvavone{l}$
is defined in Eq.~(\ref{eq:def-vavone}).
\label{thm:mse}
\end{thm}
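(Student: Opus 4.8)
The plan is to substitute the closed-form error expression of Theorem~\ref{thm:delta-eigenvector} into the two statistics and then split the result into a random (finite-sample) part and a deterministic (finite-AC) part. Writing the error as
$$\error \pdsmcoveigv{l} = -\vamone_l \error \mcov \, \mcoveigv_l - \vamone_l \pvavone{l},$$
the first summand is the only random quantity, since it carries the sample-covariance perturbation $\error \mcov$, whereas $\vamone_l$ (defined in Eq.~(\ref{eq:def-varmone})) and $\pvavone{l}$ (defined in Eq.~(\ref{eq:def-vavone})) are deterministic, depending only on the true eigendata and on the eigendata of $\avgmat{}$ through $\itrac$.

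First I would form the outer products $\error \pdsmcoveigv{l} \, \error \pdsmcoveigv{m}^H$ and $\error \pdsmcoveigv{l} \, \error \pdsmcoveigv{m}^T$ and expand each into four terms. The two cross terms in each expansion are linear in $\error \mcov$ and are flanked by deterministic factors; hence, upon taking the expectation and invoking the unbiasedness of the sample covariance matrix, $\expect{\error \mcov} = \zeros$, both cross terms vanish. The deterministic--deterministic term is unaffected by the expectation and immediately yields $\vamone_l \pvavone{l} \pvavonet{m}{H} \vamone_m^H$ for $\sosone$ and $\vamone_l \pvavone{l} \pvavonet{m}{T} \vamone_m^T$ for $\sostwo$, reproducing the second summand of each claimed expression.

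It then remains to evaluate the random--random terms $\vamone_l \expect{\error \mcov \, \mcoveigv_l \mcoveigv_m^H \error \mcov^H} \vamone_m^H$ and $\vamone_l \expect{\error \mcov \, \mcoveigv_l \mcoveigv_m^T \error \mcov^T} \vamone_m^T$. The key observation is that in the centralized limit $\itrac \to \infty$ one has $\pvavone{l} \to \zeros$, so Theorem~\ref{thm:delta-eigenvector} reduces to $\error \mcoveigv_l = -\vamone_l \error \mcov \, \mcoveigv_l$, which is exactly the first-order perturbation of the centralized eigenvector in Eq.~(\ref{eq:error-eigenvector-cent}). Hence the random--random terms coincide algebraically (same operator $\vamone$, same $\error \mcov$) with the conventional centralized second-order statistics $\expect{\error \mcoveigv_l \, \error \mcoveigv_m^H}$ and $\expect{\error \mcoveigv_l \, \error \mcoveigv_m^T}$, whose large-$N$ asymptotic values are the classical results of \cite[Theorem~9.2.2]{brillinger2001time}. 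Substituting them supplies the first summands, namely $\frac{\mcoveige_l}{N}\sum_{i\neq l}\frac{\mcoveige_i}{(\mcoveige_l - \mcoveige_i)^2}\mcoveigv_i\mcoveigv_i^H\,\chrod{l,m}$ and the pseudo-covariance form $\frac{\mcoveige_l \mcoveige_m}{N}\frac{\mcoveigv_l \mcoveigv_m^T}{(\mcoveige_l - \mcoveige_m)^2}(\chrod{l,m}-1)$.

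The main obstacle is the explicit evaluation of the fourth-order moment $\expect{\error \mcov \, \mcoveigv_l \mcoveigv_m^H \error \mcov^H}$ and its transpose analogue. This requires the complex Gaussian moment factorization (Isserlis/Wick) for the entries of $\error \mcov$, whose asymptotic covariances are $O(1/N)$ and factor through $\mcov$. Projecting onto the eigenbasis and using $\mcov \mcoveigv_i = \mcoveige_i \mcoveigv_i$ together with the annihilation property $\vamone_l \mcoveigv_l = \zeros$ (which follows from $\vamfour_{-l}^H \mcoveigv_l = \zeros$) produces the diagonal-in-$i$, Kronecker-$\chrod{l,m}$ structure; the transpose statistic is where the $(\chrod{l,m}-1)$ sign pattern of the pseudo-covariance appears, reflecting the non-Hermitian nature of that outer product. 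Since this bookkeeping is identical to the cited centralized derivation, I would invoke \cite[Theorem~9.2.2]{brillinger2001time} rather than recompute it, keeping the argument short and transparent.
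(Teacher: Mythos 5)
Your proposal is correct and takes essentially the same route as the paper's proof in Appendix~\ref{apx:theorem3}: substitute the error expression from Theorem~\ref{thm:delta-eigenvector}, expand the outer products so the deterministic term $\vamone_l \pvavone{l} \pvavonet{m}{H} \vamone_m^H$ (resp.\ its transpose analogue) falls out directly, and invoke \cite[Theorem~9.2.2]{brillinger2001time} for the remaining fourth-order moment terms rather than recomputing them. Your only addition is making explicit the $\expect{\error \mcov} = \zeros$ argument that kills the cross terms, which the paper leaves implicit.
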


\begin{proof}
See Appendix \ref{apx:theorem3}.
\end{proof}

Note that only the second terms in 
$\sosone$
and
$\sostwo$ 
depend on the number of AC iterations $\itrac$
(through the vectors $\pvavone{l}$ and $\pvavone{m}$)
and as $\itrac\!\rightarrow\!\infty$, these terms converge to zero.
Consequently, 
$\sosone$
and
$\sostwo$
tend to the centralized case found in
\cite{brillinger2001time},
when $\itrac \rightarrow \infty$.
Moreover, 
as $N \rightarrow \infty$
for $\itrac < \infty$,
$\sosone$ and $\sostwo$
do not converge to zero,
i.e. 
the d-PM is not a consistent estimator for $\{ \mcoveigv_l \}_{l=1}^M$,
unless $\itrac$ is infinitely large. 
Theorem~\ref{thm:mse}
shows that, 
in the second order statistics of the eigenvector estimates,
the AC errors appear as an additive error term
which is remarkable since in Theorem~\ref{thm:pm-covariance}
the corresponding errors for the sample covariance 
are expressed as an element-wise multiplication with the matrix 
$\left( \sensorsel \avgmat{}^{\itrac}  \sensorsel^T \right)$.
 
Note that in practice $\itrac$ can not be chosen to be arbitrarily large,
thus, the second terms in 
$\sosone$ and $\sostwo$
will always be non-zero.
However,
$\itrac$ can usually be chosen such that the second terms in  
$\sosone$ and $\sostwo$ 
are of the same order as the first terms.
The proper choice of $\itrac$ will be further addressed 
in the simulations in Sec.~\ref{sec:simualtion}.

\section{Performance Analysis of the d-ESPRIT Algorithm}
\label{sec:d-ESPRIT}
In this section, we briefly review   
the decentralized ESPRIT (d-ESPRIT) algorithm presented in \cite{suleiman2013decentralized}.
Then, results from Theorem~\ref{thm:mse} are  
applied to derive an analytical expression for the
MSE of DOA estimation using the d-ESPRIT algorithm.

\subsection{Signal Model and the ESPRIT Algorithm}
Consider a planar sensor array composed of $K$ identically oriented
uniform linear subarrays,
where the $k$th subarray is composed of $M_k$ sensors.
The distance $\sepdist$ between two successive sensors 
measured in half wavelength is 
identical for all subarrays, see Fig.~\ref{fig:topology}.
The displacements between the subarrays are considered to be unknown,
thus, the array is partly calibrated.

Signals of $L$ far-field and narrow-band sources impinge on the system of the subarrays from directions
$\doas=[\doa_1   \mdots   \doa_L]^T$.
The output of the $k$th subarray at time $t$ is given by
\begin{equation} 
\msr_k(t) = \steermat_k(\doas) \sourcesigs(t) + \noise_k(t),
\label{eq:def-subarray-response}
\end{equation} 
where $\msr_k(t) \triangleq [  \msrscal_{k,1}(t)   \mdots    \msrscal_{k, M_k}(t) ]^T$ 
is the baseband output of the $M_k$ sensors,
$\sourcesigs(t) \triangleq [\sourcesig_1(t)   \mdots   \sourcesig_L(t)]^T$ 
is the signal vector of the $L$ Gaussian sources and
$\noise_k(t) \triangleq [\noisescal_{k,1}(t)   \mdots   \noisescal_{k, M_k}(t)]^T$ 
is the vector of temporally and spatially complex circular white Gaussian sensor noise. 
The steering matrix of the $k$th subarray is
$\steermat_k(\doas) \triangleq [\steervec_k(\doa_1)   \mdots   \steervec_k(\doa_L)]$,
where $\steervec_k(\doa_l)$ is the response of the 
$k$th subarray corresponding to a source from direction $\doa_l$ 
relative to the array broadside
\begin{equation}
\steervec_k(\theta_l) = e^{ \jmath \pi \refsenloc_{k}^T \sincosdoa_l  } 
\left[ 1, e^{ \jmath \pi \sepdist \sin \doa_l }    
\mdots
e^{ \jmath (M_k - 1)  \pi \sepdist \sin \doa_l }  \right]^T,
\label{eq:svk}
\end{equation}
where $\sincosdoa_l = [\sin \doa_l, \cos \doa_l]^T$
and $\refsenloc_{k}$ is the position of the first sensor in the $k$th subarray
relative to the reference sensor in the first subarray,
which is considered to be unknown.
The measurement vector of the array is
\begin{equation}
\msr(t) = \steermat(\doas) \sourcesigs(t) + \noise(t),
\label{eq:def-array-response}
\end{equation} 
where 
$\msr(t) \triangleq [\msr_1^T(t) \mdots \msr_K^T(t)]^T$,
as in Eq.~(\ref{eq:measurements}),
$\steermat(\doas) \triangleq [\steermat_1^T(\doas) \mdots \steermat_K^T(\doas)]^T$ and 
$\noise(t) \triangleq [\noise_1^T(t) \mdots \noise_K^T(t)]^T$.
The measurement covariance matrix is
\begin{equation} 
\mcov \triangleq
\expect{ \msr(t) \msr^H(t)}  
  = \steermat(\doas) \scov \steermat^H(\doas) + \noisevariance \id_{M}
\label{eq:def-mcovk}
\end{equation} 
where $\scov \triangleq \expect{\sourcesigs(t) \sourcesigs^H(t) }$ is the source covariance matrix
and $\noisevariance$ is the noise variance
and $\expect{\noise(t) \noise^H(t) } = \noisevariance \id_{M}$.
The eigendecomposition of 
the measurement covariance matrix can be partitioned as
\begin{equation}
 \mcov =  
			\sigsub{} \sigeigs{} \sigsub{}^H + 
			 \noisesub{} \noiseeigs{} \noisesub{}^H
\end{equation}
where $\sigeigs{} \in \reals^{L \times L}$ and 
$\noiseeigs{}  \in \reals^{(M - L) \times (M - L)}$ 
are diagonal matrices containing the signal and noise eigenvalues, respectively, 
$\sigsub{} \!\triangleq\! \left[ \mcoveigv_1,\ldots,\mcoveigv_L \right] \!\in\! \complex^{M \times L}$ and 
$\noisesub{} \! \triangleq \!\left[ \mcoveigv_{L+1},\ldots, \mcoveigv_{M} \right]\! \in\! \complex^{M \times (M-L)}$ 
are the signal and noise eigenvector matrices, respectively, and
$\mcoveigv_{1},\ldots,\mcoveigv_{M}$
are the eigenvectors of the matrix $\mcov$
corresponding to the eigenvalues
$\mcoveige_{1}\!\geq \!\ldots \!\geq \!\mcoveige_{L}\!\geq\! \mcoveige_{L+1}\! =\! \ldots\! =\! \mcoveige_{M} \!= \!\noisevariance$,
see Eq.~(\ref{eq:cov-eig-def}).

The ESPRIT algorithm exploits the translational invariance structure of the measurement setup.
This invariance structure is expressed in Fig.~\ref{fig:topology},
where the sensors of the array are partitioned into upper and lower groups.
The upper group contains the first $M_k-1$ sensors of the $k$th subarray
and the lower group contains the last $M_k-1$ sensors of the $k$th subarray.
The two groups are related by a translation with lag $d$ \cite{Roy1989}.
Let $\uppersel \eq \diag { \uppersel_1 \mdots \uppersel_K }$
and $\lowersel \eq \diag { \lowersel_1 \mdots \lowersel_K }$,
where  
$\uppersel_k \eq \left[ \id_{M_k}, \zeros_{M_k} \right]$ and 
$\lowersel  \eq \left[ \zeros_{M_k-1}, \id_{M_k-1} \right]$,
denote the selection matrices corresponding to the upper and lower groups, respectively.
Based on the selection matrices, we define two matrices 
$
	\uppersigsub{} \triangleq \uppersel \sigsub{}
$ and $
	\lowersigsub{} \triangleq \lowersel \sigsub{}
$.
In the conventional Least Squares ESPRIT \cite{Roy1989}, the DOAs 
are computed from the eigenvalues of the matrix
\begin{equation}
\espritdelay = \left(\uppersigsub{}^H \uppersigsub{} \right)^{-1} \uppersigsub{}^H \lowersigsub{}
\label{eq:esprit-compute-delay}
\end{equation}  
as follows
\begin{equation}
\doa_l = \sin^{-1}(\arg( \espritdelayeig_l )/(\pi \sepdist))
\label{eq:esprit-doas}
\end{equation}
where $\espritdelayeig_l$ for $l=1 \mdots L$ are the eigenvalues of the matrix $\espritdelay$,
see \cite{Roy1989} for details.   

In practice, the true covariance matrix (\ref{eq:def-mcovk}) is not available and its finite sample estimate
$\smcov$ is calculated from $N$ snapshot of the array output as in Eq.~(\ref{eq:def-smcov}).
Let $\ssigsub{}, \suppersigsub{}, \slowersigsub{}, \smcoveigv_i, \sespritdelay$, and $\sespritdelayeig_l$ 
be the estimates of 
$\sigsub{}, \uppersigsub{}, \lowersigsub{}, \mcoveigv_i, \espritdelay$, and $\espritdelayeig_l$, respectively,
obtained from the eigendecomposition of the sample covariance matrix $\smcov$.
\begin{figure}
\centering
\includegraphics[width=0.57\columnwidth]{./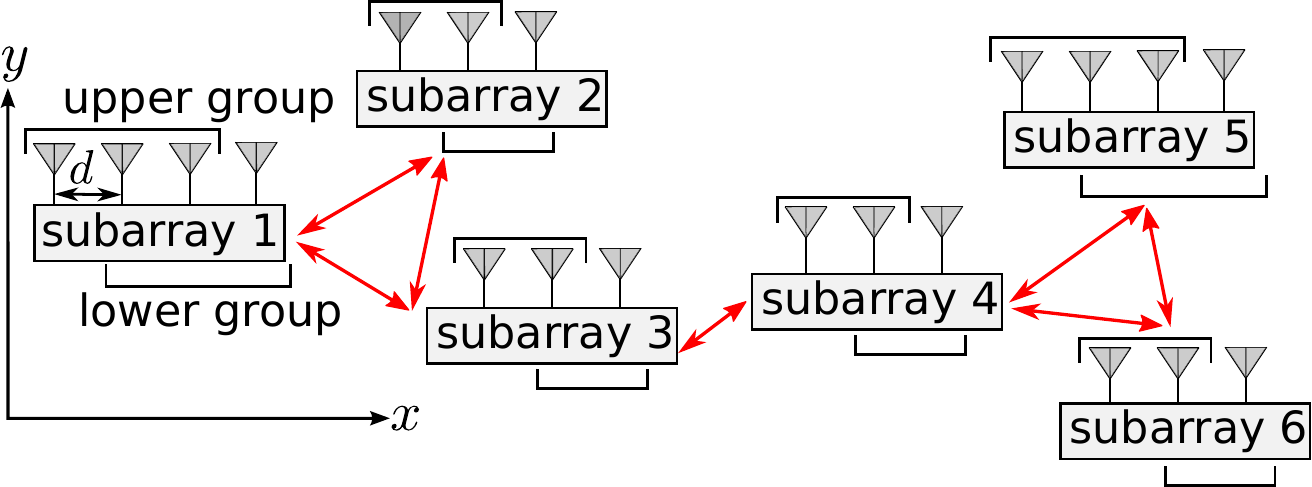}
\caption{Array topology and sensor grouping for 6 subarrays. 
The read lines indicate communication links between neighboring subarrays.}
\label{fig:topology} 
\end{figure}

\subsection{The d-ESPRIT Algorithm}
The decentralized ESPRIT (d-ESPRIT) algorithm which is presented in \cite{suleiman2013decentralized} 
comprises two decentralized steps,
first, the decentralized signal subspace estimation using the d-PM,
second, the decentralized estimation of the matrix $\espritdelay$.

The decentralized signal subspace estimation is carried out as explained in Sec.~\ref{sec:dpm}.
The resulting decentralized estimate of $\sigsub{}$, denoted as $\dssigsub{}$,
is distributed among the subarrays,
where each subarray stores only a part from $\dssigsub{}$.
In \cite{suleiman2013decentralized}, based on the AC protocol,
a decentralized algorithm for estimating the matrix $\espritdelay$ is introduced.
Denote the corresponding estimate at the $k$th subarray
as $\dsespritdelay_{\cpy{k}}$.
In \cite{suleiman2013decentralized},
the computation of $\dsespritdelay_{\cpy{k}}$ is achieved 
by rewriting Eq.~(\ref{eq:esprit-compute-delay}) 
as
\begin{equation}
\espritdelayt_{\cpy{k}} \dsespritdelay_{\cpy{k}} = \espritdelaytt_{\cpy{k}},
\label{eq:estimate-psi}
\end{equation}
where 
$\espritdelayt_{\cpy{k}}$
and
$\espritdelaytt_{\cpy{k}}$
are respectively the decentralized estimate of
$\uppersigsub{}^H \uppersigsub{}$
and
$\uppersigsub{}^H \lowersigsub{}$,
at the $k$th subarray.
The AC protocol is used to compute each entry of the matrices
$\espritdelayt_{\cpy{k}}$ and $\espritdelaytt_{\cpy{k}}$ as follows:
\begin{equation}
\espritdelayst_{i, j, \cpy{k}} \!\!= \!\!K \!\! 
	\left[  \!
	   \avgmat{}\!^{\itracttt} \!
	   [\dsmcoveigv_{i, 1}^H \uppersel_1^H \!  \uppersel_1 \dsmcoveigv_{j, 1} \mdots 
	   \dsmcoveigv_{i, K}^H \uppersel_K^H  \uppersel_K \dsmcoveigv_{j, K}]^T\! 
    \right]_{k}\!,
   \label{eq:d-esprit-psi-1}
\end{equation}
and
\begin{equation}
\espritdelaystt_{i, j, \cpy{k}} \!\!= \!\!K \!\! 
	\left[  \!
	   \avgmat{}\!^{\itracttt} \! 
	   [\dsmcoveigv_{i, 1}^H \uppersel_1^H  \lowersel_1 \! \dsmcoveigv_{j, 1} \mdots 
	   \dsmcoveigv_{i, K}^H \uppersel_K^H  \lowersel_K \dsmcoveigv_{j, K}]^T\! 
    \right]_{k}\!,
    \label{eq:d-esprit-psi-2}
\end{equation}
where 
$\espritdelayst_{i, j, \cpy{k}}$
and
$\espritdelaystt_{i, j, \cpy{k}}$
denote the $(i,j)$th entries of the matrices
$\espritdelayt_{\cpy{k}}$
and
$\espritdelaytt_{\cpy{k}}$,
respectively,
and
$\itracttt$ denotes
the number of AC iterations used to compute 
$\espritdelayst_{i, j, \cpy{k}}$
and
$\espritdelaystt_{i, j, \cpy{k}}$.
Thus, the $k$th subarray can estimate the matrix $\espritdelay$
as presented in Eq.~(\ref{eq:estimate-psi}).
The DOA estimation is carried out locally in the $k$th 
subarray using the eigenvalues of the matrix $\dsespritdelay_{\cpy{k}}$
in Eq.~(\ref{eq:esprit-doas}).

%
%
%
To simplify the performance analysis of the d-ESPRIT algorithm, 
the following assumption is made
\begin{enumerate}
  \item [A3:] The number of AC iterations $\itracttt$ which is used to compute
  $\dsespritdelay_{\cpy{k}}$ from $\dssigsub{}$ is large compared to the number
  of AC iterations used to compute $\dssigsub{}$, i.e., $\itracttt \gg \itrac$.  
\end{enumerate}
Under Assumption A3, 
the AC errors
in 
the decentralized estimate of $\espritdelay$
are negligible compared to those in $\dssigsub{}$.
Thus,
the decentralized estimate of $\espritdelay$
is a good approximation of the centralized 
one
using $\dssigsub{}$,
i.e., 
\begin{equation}
\dsespritdelay_{\cpy{k}} = \dsespritdelay =  \left(\dsuppersigsub{}^H \dsuppersigsub{} \right)^{-1}
    \dsuppersigsub{}^H \dslowersigsub{},
\label{eq:esprit-dist-delay-dist}
\end{equation}
where $\dsuppersigsub{} \eq \uppersel \dssigsub{}$,
 $\dslowersigsub{} \eq \lowersel \dssigsub{}$
 and $k=1 \mdots K$.
Let $\dsespritdelayeig_l$ for $l=1 \mdots L$
be the eigenvalues of $\dsespritdelay$.
In the d-ESPRIT algorithm, $\dsespritdelayeig_l$ is used as an estimate of
$\espritdelayeig_l$. Thus, we define the estimation error 
$\error \dsespritdelayeig_l$ as
\begin{equation}
\dsespritdelayeig_l \eq \espritdelayeig_l + \error \dsespritdelayeig_l,
\label{eq:d-esprit-dsespritdelayeig}
\end{equation}
for $l=1 \mdots L$,
where $\error \dsespritdelayeig_l$ accounts for both 
E1 and E2 types of error.

\subsection{The MSE for DOA Estimation based on d-ESPRIT}
\label{sec:MSE-ours}
In \cite{rao1989performance},
the MSE of DOA estimation using 
the conventional Least Squares ESPRIT is presented.
Assumption A3 allows us to use the analysis from \cite{rao1989performance}
by replacing 
$\expect{  \error \mcoveigv_l \, \error \mcoveigv_m^H }$
and $\expect{ \error \mcoveigv_l \, \error \mcoveigv_m^T }$
with 
$\sosone$
and $\sostwo$,
respectively.
Thus, the following result from \cite{rao1989performance} holds,
\begin{equation}
\expect{ (\error \dsdoa_l)^2 } = 
         \dfrac{ \expect{ | \error \dsespritdelayeig_l  |^2 } - 
         \mRe [ \big( \espritdelayeig_l^* \big)^2  
           \expect{  ( \error \dsespritdelayeig_l )^2} ]  }
           {2 \left( \pi \sepdist{} \cos \doa_l \right)^2},
\label{eq:rao-doa-error}  
\end{equation}
where
\begin{equation}
\expect { | \error \dsespritdelayeig_l  |^2 } =
		\rmsevone_l^H \expect {
			\error \dssigsub{} \,  
			\espritdelayeigrv _l \espritdelayeigrv ^H_l
			\error \dssigsub{}^H
		} \rmsevone_l, 
\label{eq:rao-1} 
\end{equation}
\begin{equation}
\begin{aligned}
\expect { ( \error \dsespritdelayeig_l  )^2 } =
	\rmsevtwo_l^H \expect {
		\error \dssigsub{} \, 
		\espritdelayeigrv_l \espritdelayeigrv^T_l
		\error \dssigsub{}^T
	} \rmsevtwo_l^*, 
\end{aligned}
\label{eq:rao-2}
\end{equation} 
$\dssigsub{} = \sigsub{} + \error \dssigsub{}$,
$
\rmsevone_l^H \eq 
	\espritdelayeiglv_l \left( \uppersigsub{}^H \uppersigsub{} \right)^{-1} 
	\uppersigsub{}^H 
	\left( \uppersel - \espritdelayeig_l^* \lowersel \right)
$
and
$
\rmsevtwo_l^H \eq 
	\espritdelayeiglv_l \left( \uppersigsub{}^H \uppersigsub{} \right)^{-1} 
	\uppersigsub{}^H 
	\left(\lowersel  - \espritdelayeig_l \uppersel \right)
$.
The $l$th left and right eigenvectors 
which correspond to the $l$th eigenvalue
of the matrix $\espritdelay$
are denoted as
$\espritdelayeiglv_l$ and $\espritdelayeigrv_l$,
respectively.
In Eq.~(\ref{eq:rao-1}) and Eq.~(\ref{eq:rao-2}),
we replaced
the conventional error $\error \sigsub{}$
by $\error \dssigsub{}$
in the corresponding expressions of \cite{rao1989performance}.

Using the expression from Theorem~\ref{thm:mse},
the expectation of the right hand side of Eq.~(\ref{eq:rao-1}) is rewritten 
as  
\begin{equation}
\begin{aligned}
\expect{
			\error \dssigsub{} \, 
			\espritdelayeigrv_l \espritdelayeigrv^H_l
			\error \dssigsub{}^H
		} &= \sum_{i=1}^{L}\sum_{j=1}^{L} 
		           \left[ \espritdelayeigrv_{l}\espritdelayeigrv_{l}^H \right]_{i,j} 
		           \sosonett \\
&= \frac{1}{N}\sum_{i=1}^{L} \sum_{\underset{\scriptstyle{j\neq i}}{j=1}}^{M}
  \dfrac{\mcoveige_i \mcoveige_j}{(\mcoveige_i - \mcoveige_j)^2} 
      \left[ \espritdelayeigrv_{l}\espritdelayeigrv_{l}^H \right]_{i,i}
          \mcoveigv_i  \mcoveigv_i^H \\
	&  +	\sum_{i=1}^{L}\sum_{j=1}^{L}
	\left[ \espritdelayeigrv_{l}\espritdelayeigrv_{l}^H \right]_{i,j}
	\vamone_i \pvavone{i} \pvavonet{j}{H} \vamone_j^H.
\end{aligned}
\label{eq:rmse-result-1} 
\end{equation}

Similarly, the expectation of the right hand side of Eq.~(\ref{eq:rao-2}) 
is written as
\begin{equation}
\begin{aligned}
\expect {
			\error \dssigsub{} \, 
			\espritdelayeigrv_l \espritdelayeigrv^T_l
			\error \dssigsub{}^T
		} &= \sum_{i=1}^{L}\sum_{j=1}^{L} 
		           \left[ \espritdelayeigrv_{l}\espritdelayeigrv_{l}^T \right]_{i,j} 
		           \sostwott \\
		  &=  -\sum_{i=1}^{L}\sum_{ \small{ \begin{matrix} j=1 \\ j \neq i \end{matrix} }}^{L}
		  \dfrac{ [ \espritdelayeigrv_{l}\espritdelayeigrv_{l}^T ]_{i,j} \mcoveige_i \mcoveige_j  \mcoveigv_i \mcoveigv_j^T}
		        {N \left( \mcoveige_i -\mcoveige_j \right)^2} \\ 
		& + \sum_{i=1}^{L}\sum_{j=1}^{L} [ \espritdelayeigrv_{l}\espritdelayeigrv_{l}^T ]_{i,j}
			\vamone_i \pvavone{i} \pvavonet{j}{T} \vamone_j^T.
\end{aligned} 
\label{eq:rmse-result-2}
\end{equation}

Equations (\ref{eq:rao-doa-error})--(\ref{eq:rmse-result-2})
provide the analytical expressions of the MSE for 
the DOA estimator using the d-ESPRIT algorithm.
The second terms of Equations (\ref{eq:rmse-result-1}) and 
(\ref{eq:rmse-result-2})
differ from the expressions in \cite{rao1989performance}.
Note that because of these terms, 
the MSE does not approach zero even if an infinitely large number of samples 
is available,
i.e. the d-ESPRIT algorithm is not a consistent estimator of the DOAs.
However, the simulations in Sec.~\ref{sec:simualtion}
demonstrate that for a finite number of samples and a moderate SNR, 
a finite number of AC iterations is sufficient to achieve 
a performance comparable to that of the conventional ESPRIT algorithm \cite{Roy1989},
and to achieve the CRB \cite{see2004direction}.

\section{Simulation Results}
\label{sec:simualtion}
An array composed of $K\!=\!6$ subarrays each containing 2 sensors, i.e., $M\!=\!12$,
separated by half-wavelength 
is used in the simulations.
The locations of the first sensors at the subarrays are 
$(0,      0)$,
$(0.45,   0.99)$,
$(3.02,    0.45)$,
$(5.61,    0.90)$,
$(8.03,    1.46)$
and
$(8.70,    0.50)$
measured in half-wavelength.
The upper and lower selection matrices of the $k$th subarray are 
$\uppersel_k=[1, 0]$ and $\lowersel_k=[0, 1]$.
The array topology depicted in Fig.~\ref{fig:topology}
is assumed. Thus,
the neighboring sets are 
$\neighbor{1}=\{2, 3\},
\neighbor{2}=\{1, 3\},
\neighbor{3}=\{1, 2, 4\},
\neighbor{4}=\{3, 5, 6\},
\neighbor{5}=\{4, 6\}
$
and
$
\neighbor{6}=\{4, 5\}$,
where the $k$th subarray communicates only with its neighbors $\neighbor{k}$.
The entries of the weighting matrix $\avgmat{}$ are selected as follows 
\begin{equation}
[\avgmat{}]_{i,j} = \left\{ \!\!\!\!
\begin{array}{l l}
\frac{1}{\max\{\card{ \neighbor{i} }, \card{ \neighbor{j}} \}}, \!\!\!   &  \text{ if }    i \neq j \text{ and }  j \in \neighbor{i}\\
\avgmat{i},   &   \text{ if }   i = j \\
0,       & \text{ otherwise},
\end{array} 
\right.
\label{eq:avg-choice-weighting-factor}
\end{equation}  
where $\card{ \neighbor{i} }$ is the number of elements in the set $\neighbor{i}$.
The weighting factors $\{\avgmat{i}\}_{i=1}^K$ are chosen as   
$\avgmat{i} = 1 - \sum_{j=1}^K \avgmat{i,j}$;
refer to \cite{Xiao2004} for further details.
This choice of the weighting factors only requires that 
each node knows the degree of its neighbors,
thus,
local but not global knowledge about the network topology 
is required at the node level.
The weighting matrix $\avgmat{}$ resulting from 
the weighting scheme in Eq.~(\ref{eq:avg-choice-weighting-factor})
guarantees asymptotic convergence of the AC protocol, 
provided that the graph associated with the network is not bipartite \cite{Xiao2004}. 

Signals from $L=3$ equal-powered Gaussian sources impinge onto the array from directions 
$-14^{\circ}, -10^{\circ}$ and $5^{\circ}$.
In the sequel, we evaluate our analytical expressions
for the performance of the d-PM and the d-ESPRIT algorithm. 

\subsection{Performance of the d-PM}
In the first set of simulations, shown in Fig.~\ref{fig:d-pm-snr} and Fig.~\ref{fig:d-pm-n},
the performance of the d-PM is evaluated as follows. 
We estimate $L=3$ eigenvectors of the sample covariance matrix at the $i$th realization using the d-PM,
i.e., we compute $\dssigsub{}(i)=[\dsmcoveigv_1(i), \dsmcoveigv_2(i), \dsmcoveigv_3(i) ]$,
for $i=1 \mdots 200$ realizations.
Since the eigendecomposition is unique up to a multiplication with a unity-magnitude complex scalar,
we use the method introduced in \cite[Eq.~(54)]{friedlander1998second} to compute this scalar
and correct the estimated eigenvectors.
Then, the error matrix $\error \dssigsub{}(i)=\dssigsub{}(i)-\sigsub{}$ is computed,
where $\sigsub{}=[\mcoveigv_1, \mcoveigv_2, \mcoveigv_3]$ is the true signal subspace.
At the $i$th realization, we define the normalized square error (SE) of the d-PM as 
\begin{equation}
\dpmse(i) \eq \trace{\error \dssigsub{}(i)\error \dssigsub{}^H(i)}/\trace{\sigsub{} \sigsub{}^H}.
\end{equation}
Finally, we compute the root mean square error (RMSE)
\begin{equation}
\vspace{-1mm}
\dpmrmse \eq  \left( \frac{1}{200} \sum_{i=1}^{200}\dpmse(i) \right)^{\frac{1}{2}}.
\vspace{-1mm}
\label{eq.dpmrmse}
\end{equation}
The RMSE which is obtained from our analytical expression
for the d-PM algorithm is denoted as
$\adpmrmse$ 
and it is computed as
\begin{equation}
\vspace{-1mm}
\adpmrmse \eq  \left( \sum_{l=1}^{3} \sosone / \trace{\sigsub{} \sigsub{}^H} \right)^{\frac{1}{2}},
\label{eq.dpmrmse}
\end{equation}
where 
$\sosone$
is given in Theorem~\ref{thm:mse}.

In Fig.~\ref{fig:d-pm-snr}, we compare $\dpmrmse$ from 200 realizations with the 
$\adpmrmse$ at different $\snr$s where the number of samples is fixed to $N=100$ 
and the number of AC iterations $\itrac$ is taken to be $10, 20$ and $30$.
The number of PM iterations is fixed to $\itrpm=10$ for all simulations.
It can be seen from Fig.~\ref{fig:d-pm-snr} that 
the error in the estimated eigenvectors decreases with increasing $\snr$
until it reaches a certain value, which depends on $\itrac$, then it is saturated.
Note that the error computed using 
the analytical expressions $\adpmrmse$ 
corresponds well to the one computed over 200 realizations $\dpmrmse$.

In Fig.~\ref{fig:d-pm-n}, the $\snr$ is set to $10$ dB and  
$\dpmrmse$ and $\adpmrmse$ are computed for different numbers of samples $N$
for three different numbers of AC iterations $10, 20$ and $30$.
The number of PM iterations is fixed to $\itrpm=10$.
From Fig.~\ref{fig:d-pm-n}, 
it can be observed that the  
error in the estimated eigenvectors decreases with $N$ for  small values of $N$.
However when $N$ is large, $\dpmrmse$ and $\adpmrmse$ do not change with $N$
as it can be seen in Fig.~\ref{fig:d-pm-n} 
for $\itrac=10$ and $\itrac=20$.
For $\itrac=30$, $\dpmrmse$ and $\adpmrmse$ 
show a good correspondence at very large values of $N$
(which is not displayed in the figure).   
Moreover, a larger number of AC iterations results in a smaller error.
This behaviour of the  $\dpmrmse$
is in accordance with our conclusion that the 
d-PM is a consistent estimator of the eigenvectors of the true measurement covariance matrix
only when $\itrac$ is infinitely large, see Sec.~\ref{sec:d-PM-analysis}.
It can also be observed in Fig.~\ref{fig:d-pm-n} that 
the error computed using the analytical expressions $\adpmrmse$ 
corresponds to the one computed from 200 realizations of $\dpmrmse$.

\subsection{Performance of the d-ESPRIT Algorithm}
In the second set of simulations, whose results are shown in Fig.~\ref{fig:d-esprit-snr} and 
Fig.~\ref{fig:d-esprit-n},
the performance of the d-ESPRIT algorithm is evaluated and compared to the analytical expressions of Sec.~\ref{sec:d-ESPRIT}.                
In these simulations, we assume that the number of sources $L=3$
is known to all subarrays,
which is the case in many applications, 
e.g. communications, condition monitoring and acoustics.
In applications where $L$ is not known eigenvalue-based detection
criteria available in the literature 
(such as the MDL \cite{wax1985detection} and the approaches in \cite{lu2013flexible} and \cite{brcich2002detection}) 
can be adapted in the decentralized scenario to detect the number of sources.
The RSME of the d-ESPRIT algorithm is computed over 200 realizations as
\begin{equation}
\despritrmse \eq \left( \frac{1}{200} \sum_{i=1}^{200} \frac{1}{3}\sum_{l=1}^{3}
(\dsdoa_l(i) - \doa_l)^2  
\right)^{\frac{1}{2}},
\label{eq:rmse}
\end{equation}
where $\dsdoa_l(i)$ is the estimate of $\doa_l$ computed at the $i$th realization using the d-ESPRIT algorithm.   
The analytical expression of the RMSE of the d-ESPRIT algorithm is
\begin{equation}
\vspace{-1mm}
\adespritrmse \eq \left( \frac{1}{3}\sum_{l=1}^{3}
\expect{(\error \dsdoa_l)^2}  
\right)^{\frac{1}{2}},
\label{eq:sim-d-esprit-rmse}
\vspace{-1mm}
\end{equation}  
where $\expect{(\error \dsdoa_l)^2}$ is computed using Equations (\ref{eq:rao-doa-error})--(\ref{eq:rmse-result-2}).  
In this set of simulations, the number of PM iterations is set to $\itrpm=2$.
Moreover, in this simulation we plot the RMSE of 
the conventional ESPRIT algorithm computed as in Eq.~(\ref{eq:rmse}),
along with the corresponding 
performance analysis of 
the conventional ESPRIT algorithm from \cite{rao1989performance},
which we denote as $\aespritrmse$,  and 
the CRB for the conventional partly calibrated arrays \cite{see2004direction}.    
  
Fig.~\ref{fig:d-esprit-snr} demonstrates $\adespritrmse$ and $\despritrmse$ 
for different $\snr$s where a fixed number of samples $N=100$ is assumed. 
Note that at low $\snr$s the performance of the d-ESPRIT algorithm is similar to 
that of the conventional ESPRIT algorithm
and it improves with increasing $\snr$.
However, at high $\snr$s,
it can be observed that the performance of the d-ESPRIT algorithm 
deviates from that of the conventional ESPRIT algorithm.
It is clear from Fig.~\ref{fig:d-esprit-snr} that
this deviation depends on the number of AC iterations $\itrac$.
Thus, for $\itrac=30$ and $\snr$ values up to $\snr = 15$ dB
the performance of the d-ESPRIT algorithm is similar to that of the conventional 
ESPRIT algorithm and both achieve the conventional CRB,
whereas for $\itrac=10$ this deviation starts from $\snr=0$ dB.
Moreover, it can be seen from Fig.~\ref{fig:d-esprit-snr} that
the RMSE of the d-ESPRIT algorithm at high $\snr$s is saturated and cannot be decreased unless
the number of AC iterations is increased.

\begin{figure}[t]
\centering
\definecolor{mycolor1}{rgb}{0.00000,0.49804,0.00000}%
\begin{tikzpicture}

\begin{axis}[%
width=\mfw\columnwidth,
scale only axis,
separate axis lines,
every outer x axis line/.append style={black},
every x tick label/.append style={font=\scriptsize\color{black}},
xmin=-30.8817163697987,
xmax=71.2094070622041,
xlabel={\SNR},
xmajorgrids,
every outer y axis line/.append style={black},
every y tick label/.append style={font=\scriptsize\color{black}},
ymode=log,
ymin=0.0961804182489373,
ymax=2.73976460447577,
yminorticks=true,
ylabel={\NormalizedRMSE},
ylabel style={yshift=-10pt},
ymajorgrids,
yminorgrids,
legend style={at={(0.50461,0.581233)},anchor=south west,legend cell align=left,align=left,fill=white}
]
\addplot [color=blue,line width=\mlw pt, mark repeat = 2, mark phase = {2}, only marks,mark=square,mark options={solid}]
  table[row sep=crcr]{%
-30.000	1.221\\
-28.000	1.212\\
-26.000	1.222\\
-24.000	1.201\\
-22.000	1.189\\
-20.000	1.165\\
-18.000	1.125\\
-16.000	1.068\\
-14.000	0.967\\
-12.000	0.819\\
-10.000	0.662\\
-8.000	0.529\\
-6.000	0.450\\
-4.000	0.408\\
-2.000	0.390\\
0.000	0.393\\
2.000	0.382\\
4.000	0.375\\
6.000	0.371\\
8.000	0.367\\
10.000	0.366\\
12.000	0.365\\
14.000	0.365\\
16.000	0.365\\
18.000	0.365\\
20.000	0.365\\
22.000	0.365\\
24.000	0.366\\
26.000	0.364\\
28.000	0.363\\
30.000	0.365\\
32.000	0.365\\
34.000	0.365\\
36.000	0.365\\
38.000	0.365\\
40.000	0.365\\
42.000	0.366\\
44.000	0.364\\
46.000	0.366\\
48.000	0.367\\
50.000	0.367\\
52.000	0.367\\
54.000	0.368\\
56.000	0.369\\
58.000	0.368\\
60.000	0.366\\
62.000	0.366\\
64.000	0.365\\
66.000	0.364\\
68.000	0.362\\
70.000	0.361\\
72.000	0.361\\
74.000	0.363\\
76.000	0.361\\
78.000	0.361\\
80.000	0.377\\
};
\addlegendentry{\dPMT};

\addplot [color=green, line width=\mlw pt, mark repeat = 2, mark phase = {1}, only marks,mark=o,mark options={solid}]
  table[row sep=crcr]{%
-30.000	1.214\\
-28.000	1.209\\
-26.000	1.219\\
-24.000	1.199\\
-22.000	1.187\\
-20.000	1.167\\
-18.000	1.120\\
-16.000	1.059\\
-14.000	0.967\\
-12.000	0.812\\
-10.000	0.666\\
-8.000	0.516\\
-6.000	0.368\\
-4.000	0.270\\
-2.000	0.220\\
0.000	0.188\\
2.000	0.178\\
4.000	0.166\\
6.000	0.159\\
8.000	0.154\\
10.000	0.151\\
12.000	0.147\\
14.000	0.143\\
16.000	0.139\\
18.000	0.138\\
20.000	0.138\\
22.000	0.137\\
24.000	0.137\\
26.000	0.137\\
28.000	0.137\\
30.000	0.136\\
32.000	0.136\\
34.000	0.136\\
36.000	0.136\\
38.000	0.136\\
40.000	0.136\\
42.000	0.135\\
44.000	0.136\\
46.000	0.136\\
48.000	0.137\\
50.000	0.137\\
52.000	0.136\\
54.000	0.136\\
56.000	0.137\\
58.000	0.137\\
60.000	0.137\\
62.000	0.136\\
64.000	0.136\\
66.000	0.136\\
68.000	0.136\\
70.000	0.136\\
72.000	0.135\\
74.000	0.135\\
76.000	0.131\\
78.000	0.128\\
80.000	0.134\\
};
\addlegendentry{\dPMTT};
\addplot [color=red,line width=\mlw pt, only marks,mark repeat = 2, mark phase = {2}, mark=x,mark options={solid}]
  table[row sep=crcr]{%
-30.000	1.215\\
-28.000	1.202\\
-26.000	1.217\\
-24.000	1.195\\
-22.000	1.184\\
-20.000	1.164\\
-18.000	1.117\\
-16.000	1.060\\
-14.000	0.964\\
-12.000	0.829\\
-10.000	0.689\\
-8.000	0.562\\
-6.000	0.410\\
-4.000	0.294\\
-2.000	0.232\\
0.000	0.190\\
2.000	0.173\\
4.000	0.156\\
6.000	0.145\\
8.000	0.140\\
10.000	0.135\\
12.000	0.131\\
14.000	0.127\\
16.000	0.124\\
18.000	0.121\\
20.000	0.120\\
22.000	0.120\\
24.000	0.119\\
26.000	0.120\\
28.000	0.119\\
30.000	0.118\\
32.000	0.118\\
34.000	0.118\\
36.000	0.118\\
38.000	0.119\\
40.000	0.118\\
42.000	0.117\\
44.000	0.118\\
46.000	0.118\\
48.000	0.119\\
50.000	0.119\\
52.000	0.118\\
54.000	0.118\\
56.000	0.118\\
58.000	0.119\\
60.000	0.119\\
62.000	0.118\\
64.000	0.118\\
66.000	0.118\\
68.000	0.118\\
70.000	0.119\\
72.000	0.118\\
74.000	0.117\\
76.000	0.113\\
78.000	0.112\\
80.000	0.115\\
};
\addlegendentry{\dPMTTT};

\addplot [color=black, line width=0.1 pt, mark repeat=2, mark phase = {1} ,only marks,mark=*,mark options={solid}]
  table[row sep=crcr]{%
-30.000	1.219\\
-28.000	1.207\\
-26.000	1.217\\
-24.000	1.197\\
-22.000	1.186\\
-20.000	1.156\\
-18.000	1.130\\
-16.000	1.051\\
-14.000	0.959\\
-12.000	0.823\\
-10.000	0.693\\
-8.000	0.577\\
-6.000	0.436\\
-4.000	0.309\\
-2.000	0.238\\
0.000	0.194\\
2.000	0.175\\
4.000	0.157\\
6.000	0.146\\
8.000	0.140\\
10.000	0.134\\
12.000	0.129\\
14.000	0.125\\
16.000	0.121\\
18.000	0.117\\
20.000	0.114\\
22.000	0.114\\
24.000	0.112\\
26.000	0.112\\
28.000	0.111\\
30.000	0.111\\
32.000	0.110\\
34.000	0.110\\
36.000	0.110\\
38.000	0.111\\
40.000	0.111\\
42.000	0.110\\
44.000	0.110\\
46.000	0.111\\
48.000	0.111\\
50.000	0.111\\
52.000	0.111\\
54.000	0.110\\
56.000	0.111\\
58.000	0.111\\
60.000	0.111\\
62.000	0.110\\
64.000	0.110\\
66.000	0.110\\
68.000	0.111\\
70.000	0.111\\
72.000	0.110\\
74.000	0.110\\
76.000	0.111\\
78.000	0.112\\
80.000	0.114\\
};
\addlegendentry{\dPMCent};

\addplot [color=blue,dashed, line width=\mlw pt, ]
  table[row sep=crcr]{%
-30.000	2.434\\
-28.000	1.975\\
-26.000	1.612\\
-24.000	1.326\\
-22.000	1.101\\
-20.000	0.925\\
-18.000	0.788\\
-16.000	0.682\\
-14.000	0.600\\
-12.000	0.538\\
-10.000	0.490\\
-8.000	0.455\\
-6.000	0.428\\
-4.000	0.407\\
-2.000	0.392\\
0.000	0.380\\
2.000	0.371\\
4.000	0.365\\
6.000	0.360\\
8.000	0.356\\
10.000	0.352\\
12.000	0.350\\
14.000	0.348\\
16.000	0.347\\
18.000	0.345\\
20.000	0.345\\
22.000	0.344\\
24.000	0.343\\
26.000	0.343\\
28.000	0.342\\
30.000	0.342\\
32.000	0.342\\
34.000	0.342\\
36.000	0.342\\
38.000	0.342\\
40.000	0.341\\
42.000	0.341\\
44.000	0.341\\
46.000	0.341\\
48.000	0.341\\
50.000	0.341\\
52.000	0.341\\
54.000	0.341\\
56.000	0.341\\
58.000	0.341\\
60.000	0.341\\
62.000	0.341\\
64.000	0.341\\
66.000	0.341\\
68.000	0.341\\
70.000	0.341\\
72.000	0.341\\
74.000	0.341\\
76.000	0.341\\
78.000	0.341\\
80.000	0.341\\
};
\addlegendentry{\AdPMT};

\addplot [color=green,dash pattern=on 1pt off 3pt on 4pt off 1pt,line width=\mlw pt]
  table[row sep=crcr]{%
-30.000	2.347\\
-28.000	1.893\\
-26.000	1.531\\
-24.000	1.244\\
-22.000	1.016\\
-20.000	0.835\\
-18.000	0.691\\
-16.000	0.576\\
-14.000	0.485\\
-12.000	0.413\\
-10.000	0.356\\
-8.000	0.310\\
-6.000	0.273\\
-4.000	0.244\\
-2.000	0.221\\
0.000	0.203\\
2.000	0.188\\
4.000	0.177\\
6.000	0.167\\
8.000	0.160\\
10.000	0.154\\
12.000	0.149\\
14.000	0.145\\
16.000	0.143\\
18.000	0.140\\
20.000	0.138\\
22.000	0.137\\
24.000	0.136\\
26.000	0.135\\
28.000	0.134\\
30.000	0.133\\
32.000	0.133\\
34.000	0.132\\
36.000	0.132\\
38.000	0.132\\
40.000	0.132\\
42.000	0.132\\
44.000	0.131\\
46.000	0.131\\
48.000	0.131\\
50.000	0.131\\
52.000	0.131\\
54.000	0.131\\
56.000	0.131\\
58.000	0.131\\
60.000	0.131\\
62.000	0.131\\
64.000	0.131\\
66.000	0.131\\
68.000	0.131\\
70.000	0.131\\
72.000	0.131\\
74.000	0.131\\
76.000	0.131\\
78.000	0.131\\
80.000	0.131\\
};
\addlegendentry{\AdPMTT};

\addplot [color=red,solid,line width=\mlw pt]
  table[row sep=crcr]{%
-30.000	2.344\\
-28.000	1.889\\
-26.000	1.528\\
-24.000	1.241\\
-22.000	1.012\\
-20.000	0.831\\
-18.000	0.686\\
-16.000	0.572\\
-14.000	0.480\\
-12.000	0.407\\
-10.000	0.349\\
-8.000	0.302\\
-6.000	0.265\\
-4.000	0.235\\
-2.000	0.211\\
0.000	0.192\\
2.000	0.176\\
4.000	0.164\\
6.000	0.154\\
8.000	0.146\\
10.000	0.139\\
12.000	0.134\\
14.000	0.130\\
16.000	0.127\\
18.000	0.124\\
20.000	0.122\\
22.000	0.120\\
24.000	0.119\\
26.000	0.118\\
28.000	0.117\\
30.000	0.116\\
32.000	0.116\\
34.000	0.115\\
36.000	0.115\\
38.000	0.115\\
40.000	0.115\\
42.000	0.114\\
44.000	0.114\\
46.000	0.114\\
48.000	0.114\\
50.000	0.114\\
52.000	0.114\\
54.000	0.114\\
56.000	0.114\\
58.000	0.114\\
60.000	0.114\\
62.000	0.114\\
64.000	0.114\\
66.000	0.114\\
68.000	0.114\\
70.000	0.114\\
72.000	0.114\\
74.000	0.114\\
76.000	0.114\\
78.000	0.114\\
80.000	0.114\\
};
\addlegendentry{\AdPMTTT};
\end{axis}
\end{tikzpicture}%
\caption{The performance of the d-PM as a function of $\snr$ for a fixed number of samples $N=100$.} 
\label{fig:d-pm-snr}
\end{figure}
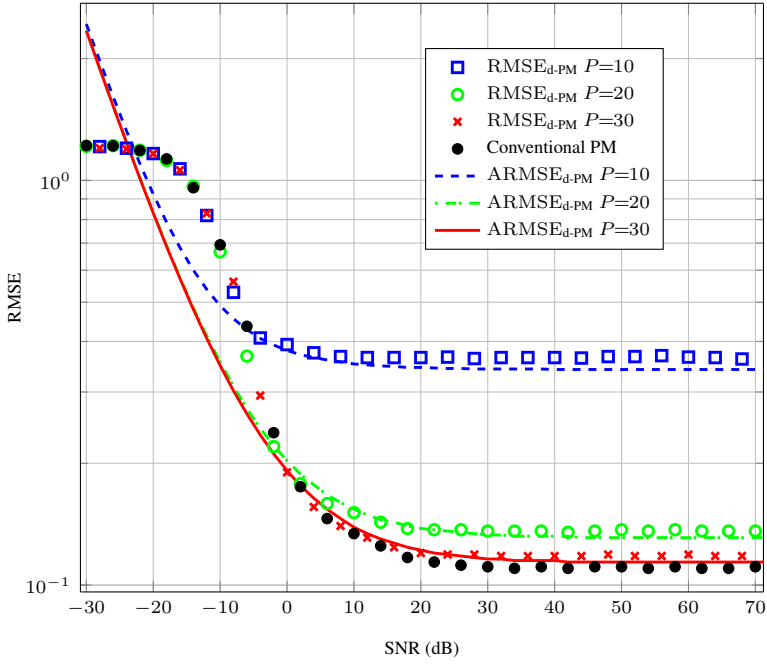
\begin{figure}[t]
\centering
\definecolor{mycolor1}{rgb}{0.00000,0.49804,0.00000}%
\begin{tikzpicture}

\begin{axis}[%
width=\mfw\columnwidth,
scale only axis,
separate axis lines,
every outer x axis line/.append style={black},
every x tick label/.append style={font=\scriptsize\color{black}},
xmin=0,
xmax=1000,
xlabel={\NumberofSamples},
xmajorgrids,
every outer y axis line/.append style={black},
every y tick label/.append style={font=\scriptsize\color{black}},
ymode=log,
ymin=0.01,
ymax=1,
yminorticks=true,
ylabel={\NormalizedRMSE},
ylabel style={yshift=-10pt},
ymajorgrids,
yminorgrids,
legend style={at={(0.01,0.01)},anchor=south west,legend cell align=left,align=left,fill=white}
]
\addplot [color=blue,line width=\mlw pt, mark repeat = 2, mark phase = {2}, only marks,mark=square,mark options={solid}]
  table[row sep=crcr]{%
30	0.479151951290832\\
50	0.420861327989245\\
100	0.372480523799753\\
200	0.353537163366288\\
300	0.340538847462321\\
400	0.338001323250243\\
500	0.339449571052992\\
600	0.339177495495889\\
700	0.335148608143769\\
800	0.335633423133184\\
900	0.336147555343961\\
1000	0.33022051737355\\
};
\addlegendentry{\dPMT};

\addplot [color=green, line width=\mlw pt, mark repeat = 2, mark phase = {1}, only marks,mark=o,mark options={solid}]
  table[row sep=crcr]{%
30	0.261822099621013\\
50	0.207162078975044\\
100	0.153915929330729\\
200	0.118646982557424\\
300	0.097272579614178\\
400	0.096283970909501\\
500	0.0891553748877466\\
600	0.087569602310438\\
700	0.0799156818186397\\
800	0.0778694838053284\\
900	0.0772423742797605\\
1000	0.0799168380541522\\
};
\addlegendentry{\dPMTT};

\addplot [color=red,line width=\mlw pt, only marks,mark repeat = 2, mark phase = {2}, mark=x,mark options={solid}]
  table[row sep=crcr]{%
30	0.253527342001843\\
50	0.196574230909917\\
100	0.139339206953051\\
200	0.0990071096565545\\
300	0.0812285180952988\\
400	0.0706817115431207\\
500	0.0635185300797248\\
600	0.0582557854755351\\
700	0.0541846316891399\\
800	0.0509180973596312\\
900	0.0482247309082797\\
1000	0.0459565261648101\\
};
\addlegendentry{\dPMTTT};

\addplot [color=black, line width=0.1 pt, mark repeat=2, mark phase = {1} ,only marks,mark=*,mark options={solid}]
  table[row sep=crcr]{%
10.000	0.394\\
20.000	0.278\\
50.000	0.167\\
100.000	0.121\\
200.000	0.087\\
300.000	0.069\\
400.000	0.062\\
500.000	0.057\\
600.000	0.048\\
700.000	0.047\\
800.000	0.044\\
900.000	0.040\\
1000.000	0.039\\
};
\addlegendentry{\dPMCent};

\addplot [color=blue,dashed, line width=\mlw pt, ]
  table[row sep=crcr]{%
30	0.411177197165535\\
50	0.378724177448723\\
100	0.352428700478113\\
200	0.338515852049695\\
300	0.333749376746001\\
400	0.331340427163742\\
500	0.329886613838566\\
600	0.328913835020295\\
700	0.328217227819738\\
800	0.327693800620505\\
900	0.327286111831165\\
1000	0.326959594787373\\
};
\addlegendentry{\AdPMT};

\addplot [color=green,dash pattern=on 1pt off 3pt on 4pt off 1pt,line width=\mlw pt]
  table[row sep=crcr]{%
30	0.238959463674703\\
50	0.200084292789944\\
100	0.146468902778636\\
200	0.104673969730738\\
300	0.092249516361824\\
400	0.086805869222887\\
500	0.082245781653573\\
600	0.0798153157581974\\
700	0.0751513879626043\\
800	0.0734968127737428\\
900	0.0721856238353247\\
1000	0.0705474420468166\\
};
\addlegendentry{\AdPMTT};

\addplot [color=red,solid,line width=\mlw pt]
  table[row sep=crcr]{%
30	0.253153534859252\\
50	0.196091884908684\\
100	0.142657901554583\\
200	0.0980459424543421\\
300	0.0800541767878069\\
400	0.0693289507772913\\
500	0.0620096986987041\\
600	0.056606851268965\\
700	0.0524077606896431\\
800	0.049022971227171\\
900	0.0462193005181942\\
1000	0.0438474784491883\\
};
\addlegendentry{\AdPMTTT};

\end{axis}
\end{tikzpicture}%
\caption{The performance of the d-PM as a function of $N$ for a fixed $\snr=10$ dB.} 
\label{fig:d-pm-n}
\end{figure}
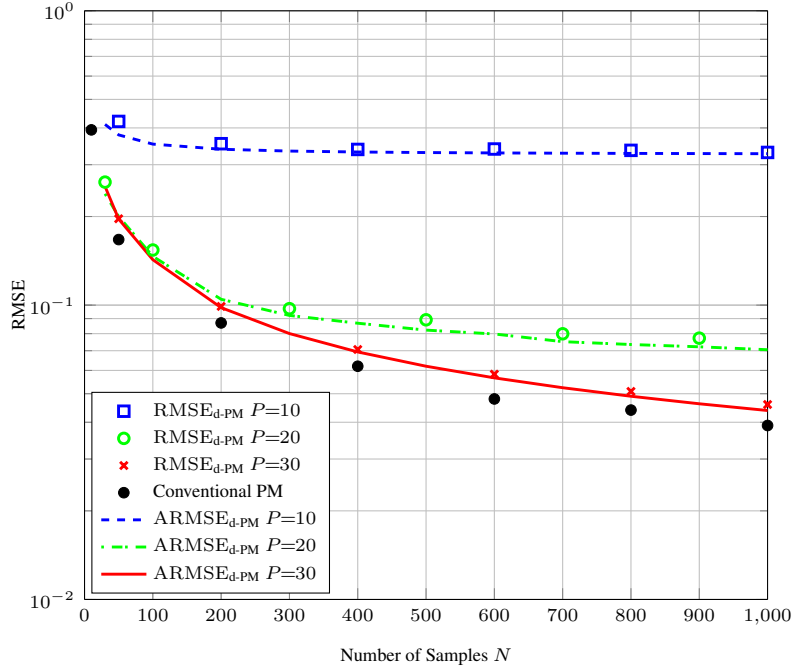
\begin{figure}[t]
\centering
\definecolor{mycolor1}{rgb}{0.00000,0.49804,0.00000}%
\definecolor{mycolor2}{rgb}{1.00000,0.00000,1.00000}%
\begin{tikzpicture}

\begin{axis}[%
width=\mfw \columnwidth,
scale only axis,
separate axis lines,
every outer x axis line/.append style={black},
every x tick label/.append style={font=\scriptsize\color{black}},
xmin=-20.4599020654475,
xmax=70.4585535151977,
xlabel={\SNR},
xmajorgrids,
every outer y axis line/.append style={black},
every y tick label/.append style={font=\scriptsize\color{black}},
ymode=log,
ymin=6.76703378240473e-06,
ymax=35.1767777925381,
yminorticks=true,
ylabel={\RMSE },
ylabel style={yshift=-10pt},
ymajorgrids,
yminorgrids,
legend style={at={(0.01,0.01)},anchor=south west,legend cell align=left,align=left,fill=white}
]
\addplot [color=blue,line width=\mlw pt, mark repeat = 2, mark phase = {2}, only marks,mark=square,mark options={solid}]
  table[row sep=crcr]{%
-30	28.910\\
-25	28.760\\
-20	27.329\\
-15	20.929\\
-10	11.461\\
-5	3.703\\
0	2.263\\
5	1.746\\
10	1.551\\
15	1.482\\
20	1.688\\
25	1.556\\
30	1.574\\
35	1.641\\
40	1.532\\
45	1.573\\
50	1.522\\
55	1.616\\
60	1.549\\
65	1.537\\
70	1.509\\
75	1.459\\
80	1.593\\
85	1.513\\
90	1.512\\
95	1.494\\
100	1.529\\
120	1.512\\
140	1.524\\
160	1.494\\
180	1.468\\
200	1.571\\
300	1.502\\
};
\addlegendentry{\dESPRITT};

\addplot [color=green, line width=\mlw pt, mark repeat = 2, mark phase = {1}, only marks,mark=o,mark options={solid}]
  table[row sep=crcr]{%
-30	28.994\\
-25	27.721\\
-20	26.210\\
-15	19.653\\
-10	11.540\\
-5	5.0010\\
0	1.7076\\
5	0.8534\\
10	0.6641\\
15	0.6096\\
20	0.5914\\
25	0.5800\\
30	0.5840\\
35	0.5918\\
40	0.5806\\
45	0.5855\\
50	0.5891\\
55	0.5912\\
60	0.5757\\
65	0.5903\\
70	0.5851\\
75	0.5719\\
80	0.5751\\
85	0.5788\\
90	0.5791\\
95	0.5732\\
100	0.5795\\
120	0.5888\\
140	0.5992\\
160	0.5770\\
180	0.5925\\
200	0.5734\\
300	0.5896\\
};
\addlegendentry{\dESPRITTT};

\addplot [color=red,line width=\mlw pt, only marks,mark repeat = 2, mark phase = {2}, mark=x,mark options={solid}]
  table[row sep=crcr]{%
-30	30.285\\
-25	27.884\\
-20	26.518\\
-15	21.110\\
-10	12.803\\
-5	3.8302\\
0	1.3023\\
5	0.6545\\
10	0.3629\\
15	0.2057\\
20	0.1749\\
25	0.1371\\
30	0.1373\\
35	0.1305\\
40	0.1292\\
45	0.128\\
50	0.1281\\
55	0.1289\\
60	0.1322\\
65	0.1289\\
70	0.1430\\
75	0.1409\\
80	0.1401\\
85	0.1423\\
90	0.1413\\
95	0.1399\\
100	0.1427\\
120	0.1435\\
140	0.1394\\
160	0.1421\\
180	0.1435\\
200	0.1395\\
300	0.1439\\
};
\addlegendentry{\dESPRITTTT};

\addplot [color=blue,dashed, line width=\mlw pt, ]
  table[row sep=crcr]{%
-30	773.098\\
-25	219.200\\
-20	70.3911\\
-15	50.3021\\
-10	15.6183\\
-5	6.246\\
0	3.145\\
5	2.253\\
10	1.758\\
15	1.491\\
20	1.392\\
25	1.394\\
30	1.398\\
35	1.391\\
40	1.398\\
45	1.397\\
50	1.390\\
55	1.396\\
60	1.396\\
65	1.360\\
70	1.374\\
75	1.343\\
80	1.266\\
85	1.334\\
90	1.305\\
95	1.351\\
100	1.259\\
120	1.296\\
140	1.262\\
160	1.347\\
180	1.252\\
200	1.329\\
300	1.333\\
};
\addlegendentry{\AdESPRITT};

\addplot [color=green,dash pattern=on 1pt off 3pt on 4pt off 1pt,line width=\mlw pt, smooth]
  table[row sep=crcr]{%
-30	771.2405\\
-25	216.0142\\
-20	70.2536\\
-15	30.2553\\
-10	10.30676\\
-5	4.1872\\
0	1.8076\\
5	1.021\\
10	0.720\\
15	0.585\\
20	0.527\\
25	0.515\\
30	0.519\\
35	0.527\\
40	0.516\\
45	0.521\\
50	0.524\\
55	0.526\\
60	0.511\\
65	0.526\\
70	0.520\\
75	0.507\\
80	0.510\\
85	0.514\\
90	0.514\\
95	0.508\\
100	0.515\\
120	0.524\\
140	0.534\\
160	0.512\\
180	0.528\\
200	0.509\\
300	0.525\\
};
\addlegendentry{\AdESPRITTT};

\addplot [color=red,solid,line width=\mlw pt, smooth]
  table[row sep=crcr]{%
-30	771.158\\
-25	215.872\\
-20	70.247\\
-15	29.253\\
-10	9.393\\
-5	3.576\\
0	1.373\\
5	0.635\\
10	0.328\\
15	0.201\\
20	0.140\\
25	0.122\\
30	0.113\\
35	0.112\\
40	0.108\\
45	0.108\\
50	0.107\\
55	0.110\\
60	0.107\\
65	0.109\\
70	0.108\\
75	0.106\\
80	0.105\\
85	0.108\\
90	0.106\\
95	0.105\\
100	0.108\\
120	0.109\\
140	0.105\\
160	0.107\\
180	0.109\\
200	0.105\\
300	0.109\\
};
\addlegendentry{\AdESPRITTTT};

\addplot [color=black, line width=0.1 pt, mark repeat=2, mark phase = {1} ,only marks,mark=*,mark options={solid}]
  table[row sep=crcr]{%
-30	33.7737\\
-25	30.124\\
-20	29.937\\
-15	22.277\\
-10	13.379\\
-5	2.960\\
0	1.248\\
5	0.671\\
10	0.343\\
15	0.198\\
20	0.111\\
25	0.063\\
30	0.036\\
35	0.020\\
40	0.011\\
45	0.006\\
50	0.003\\
55	0.002\\
60	0.001\\
65	0.000\\
70	0.000\\
75	0.000\\
80	0.000\\
85	6.166e-05\\
90	3.537e-05\\
95	2.075e-05\\
100	1.207e-05\\
120	1.182e-06\\
140	1.103e-07\\
160	1.133e-08\\
180	1.161e-09\\
200	1.126e-10\\
300	8.828e-15\\
};
\addlegendentry{\cESPRIT};

 \addplot [color=red!50!blue,mark repeat=2, mark phase={1},line width=\mlw pt,mark size=1 pt,only marks,mark=triangle,mark options={solid,rotate=180}]
  table[row sep=crcr]{%
-30	602.698\\
-25	177.977\\
-20	62.0684\\
-15	22.2759\\
-10	5.5318\\
-5	2.899\\
0	1.205\\
5	0.647\\
10	0.432\\
15	0.202\\
20	0.134\\
25	0.054\\
30	0.032\\
35	0.022\\
40	0.009\\
45	0.006\\
50	0.003\\
55	0.002\\
60	0.001\\
65	0.00066\\
70	0.00034\\
75	0.00020\\
80	9.95606e-05\\
85	6.7951e-05\\
90	3.51297e-05\\
95	1.8887e-05\\
100	1.1978e-05\\
120	1.0541e-06\\
140	1.0499e-07\\
160	0\\
180	7.96274847637202e-09\\
200	0\\
300	0\\
};
\addlegendentry{\Rao};

\addplot [color=mycolor2,mark repeat=2, mark phase={2}, line width=\mlw pt,mark size=1 pt,only marks,mark=triangle,mark options={solid}]
  table[row sep=crcr]{%
-30	487.854\\
-25	155.147\\
-20	49.926\\
-15	16.62\\
-10	6.015\\
-5	2.515\\
0	1.219\\
5	0.647\\
10	0.357\\
15	0.199\\
20	0.111\\
25	0.062\\
30	0.035\\
35	0.019\\
40	0.011\\
45	0.006\\
50	0.003\\
55	0.00198\\
60	0.00111\\
65	0.00062\\
70	0.00035\\
75	0.00019\\
80	0.000111\\
85	6.291e-05\\
90	3.5371e-05\\
95	1.9894e-05\\
100	1.1187e-05\\
120	1.1187e-06\\
140	1.1187e-07\\
160	1.118e-08\\
180	1.118e-09\\
200	1.118e-10\\
300	1.118e-15\\
};
\addlegendentry{\CRB};

\end{axis}
\end{tikzpicture}%
\caption{The performance of the d-ESPRIT algorithm as a function of $\snr$ for a fixed number of samples $N=100$.}
\label{fig:d-esprit-snr}
\end{figure}
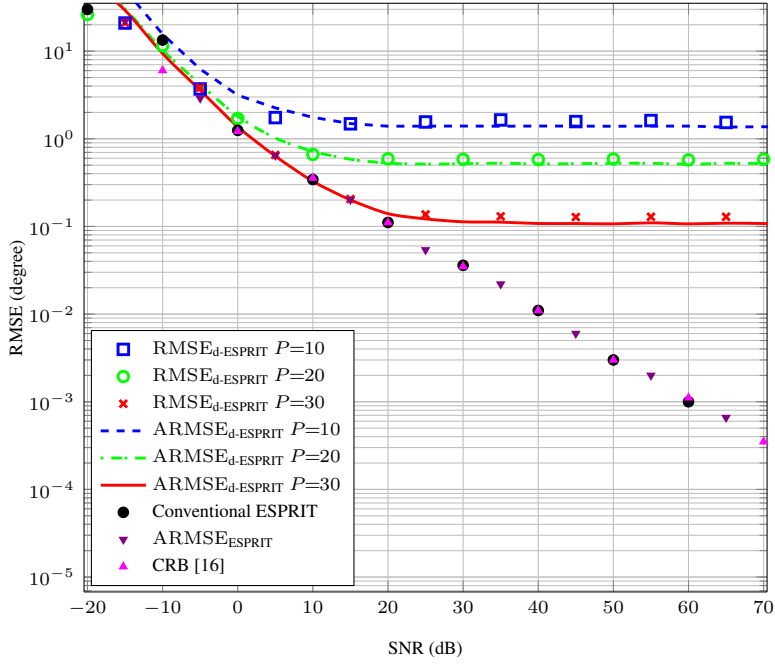
\begin{figure}[t]
\centering
\definecolor{mycolor1}{rgb}{0.00000,0.49804,0.00000}%
\definecolor{mycolor2}{rgb}{1.00000,0.00000,1.00000}%
\begin{tikzpicture}
\begin{axis}[%
width=\mfw\columnwidth,
scale only axis,
separate axis lines,
every outer x axis line/.append style={black},
every x tick label/.append style={font=\scriptsize\color{black}},
xmin=-2,
xmax=1008.43112194423,
xlabel={\NumberofSamples},
xmajorgrids,
every outer y axis line/.append style={black},
every y tick label/.append style={font=\scriptsize\color{black}},
ymode=log,
ymin=0.0966903343049588,
ymax=90.0341890727576,
yminorticks=true,
ylabel={\RMSE},
ylabel style={yshift=-10pt},
ymajorgrids,
yminorgrids,
legend style={at={(0.44,0.4)},anchor=south west,legend cell align=left,align=left,fill=white}
]
\addplot [color=blue,line width=\mlw pt, mark repeat = 2, mark phase = {1}, only marks,mark=square,mark options={solid}]
  table[row sep=crcr]{%
30	2.4999\\
50	2.2727\\
100	2.1501\\
200	2.1432\\
300	2.1839\\
400	2.2110\\
500	2.2107\\
600	2.2200\\
700	2.2758\\
800	2.1281\\
900	2.1097\\
1000	2.0732417\\
};
\addlegendentry{\dESPRITT}; 

\addplot [color=green, line width=\mlw pt, mark repeat = 2, mark phase = {1}, only marks,mark=o,mark options={solid}]
  table[row sep=crcr]{%
30	0.920\\
50	0.798\\
100	0.691\\
200	0.662\\
300	0.640\\
400	0.609\\
500	0.632\\
600	0.598\\
700	0.609\\
800	0.592\\
900	0.598\\
1000	0.597\\
};
\addlegendentry{\dESPRITTT};

\addplot [color=red,line width=\mlw pt, only marks,mark repeat = 2, mark phase = {2}, mark=x,mark options={solid}]
  table[row sep=crcr]{%
30	0.7316\\
50	0.5855\\
100	0.4112\\
200	0.3102\\
300	0.2576\\
400	0.2274\\
500	0.2079\\
600	0.1865\\
700	0.1705\\
800	0.1608\\
900	0.1562\\
1000	0.1534\\
};
\addlegendentry{\dESPRITTTT};

\addplot [color=blue,dashed, line width=\mlw pt, ]
  table[row sep=crcr]{%
30	 2.6927\\
50	 2.3253\\
100	 2.1035\\
200	 2.00472\\
300	 2.00383\\
400	 2.00339\\
500	 2.00312\\
600	 2.00295\\
700	 2.00282\\
800	 2.00272\\
900	 2.00265\\
1000 2.00259\\
};
\addlegendentry{\AdESPRITT};

\addplot [color=green,dash pattern=on 1pt off 3pt on 4pt off 1pt,line width=\mlw pt, smooth]
  table[row sep=crcr]{%
30	1.0789\\
50	0.9028\\
100	0.7418\\
200	0.6452\\
300	0.6094\\
400	0.5905\\
500	0.5789\\
600	0.5710\\
700	0.5653\\
800	0.5610\\
900	0.5576\\
1000	0.554\\
};
\addlegendentry{\AdESPRITTT};

\addplot [color=red,solid,line width=\mlw pt, smooth]
  table[row sep=crcr]{%
30	0.8707\\
50	0.6670\\
100	0.4734\\
200	0.3211\\
300	0.2593\\
400	0.2231\\
500	0.1988\\
600	0.1812\\
700	0.1677\\
800	0.1570\\
900	0.1483\\
1000	0.141\\
};
\addlegendentry{\AdESPRITTTT};

\addplot [color=black, line width=0.1 pt, mark repeat=2, mark phase = {1} ,only marks,mark=*,mark options={solid}]
  table[row sep=crcr]{%
30	0.6756\\
50	0.5192\\
100	0.3780\\
200	0.2611\\
300	0.2052\\
400	0.1866\\
500	0.150\\
600	0.1449\\
700	0.1349\\
800	0.1219\\
900	0.1174\\
1000	0.1146\\
};
\addlegendentry{\cESPRIT};

\addplot [color=red!50!blue,mark repeat=2, mark phase={2},line width=\mlw pt,mark size=1 pt,only marks,mark=triangle,mark options={solid,rotate=180}]
  table[row sep=crcr]{%
30	0.6310\\
50	0.4887\\
100	0.3456\\
200	0.2443\\
300	0.1995\\
400	0.1728\\
500	0.1545\\
600	0.1410\\
700	0.1306\\
800	0.1221\\
900	0.1152\\
1000	0.109\\
};
\addlegendentry{\Rao};

\addplot [color=mycolor2,mark repeat=2, only marks, mark phase={1}, line width=\mlw pt,mark size=1 pt,mark=triangle,mark options={solid}]
  table[row sep=crcr]{%
30	0.65198\\
50	0.50502\\
100	0.35710\\
200	0.25251\\
300	0.20017\\
400	0.16855\\
500	0.14870\\
600	0.13578\\
700	0.12897\\
800	0.12025\\
900	0.11603\\
1000 0.1129\\
};
\addlegendentry{\CRB};

\end{axis}
\end{tikzpicture}%
 
\caption{The performance of the d-ESPRIT algorithm as a function of $N$ for a fixed $\snr=10$ dB.}
\label{fig:d-esprit-n}
\end{figure}
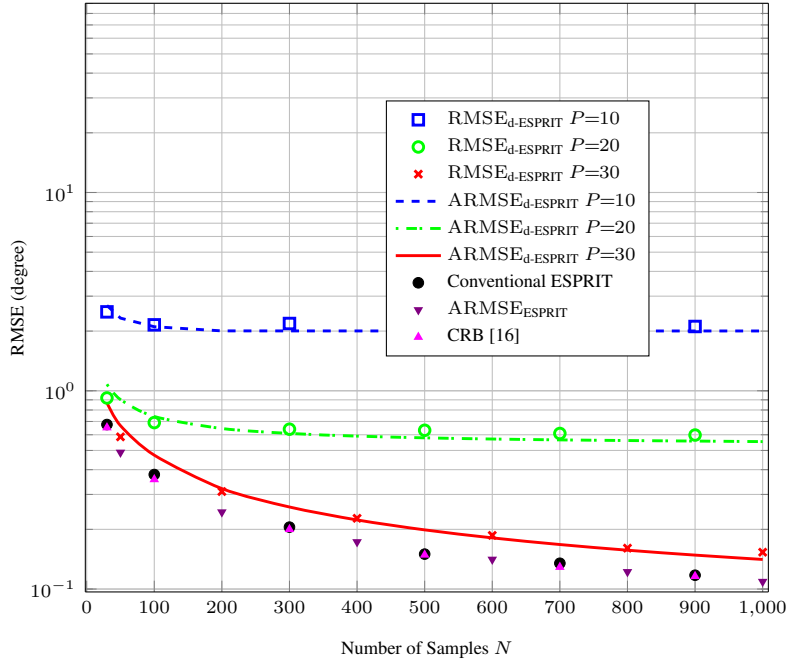

In Fig.~\ref{fig:d-esprit-n}, the $\snr$ is fixed to $10$ dB
and $\adespritrmse$ and $\despritrmse$ are computed for different number of samples $N$.
It is obvious that the error in the d-ESPRIT algorithm does not approach zero 
when $N \rightarrow \infty$, which is in accordance with our conclusion
in Sec.~\ref{sec:d-ESPRIT}, that the d-ESPRIT algorithm is not a consistent estimator of the DOAS,
unless the number of AC iterations $\itrac$ is infinitely large. 

In Fig.~\ref{fig:d-esprit-snr} and
Fig.~\ref{fig:d-esprit-n}, it can be observed 
that the values obtained for the averaged RMSE of d-ESPRIT algorithm $\despritrmse$
are similar to the results of the 
analytical expression $\adespritrmse$.

\section{Conclusions}
In this paper,  we derived an analytical expression 
for the second order statistics of the eigenvectors for 
the sample covariance matrix computed using the d-PM.
This analytical expression is used to derive the 
MSE of the DOA estimates obtained from
the d-ESPRIT algorithm.
It has been shown that 
the AC errors in d-PM and d-ESPRIT algorithm
are dominant when $N$ is very large or the $\snr$ is very high.
In our analysis, errors resulting from a small number of PM iterations is not considered.
Nevertheless, in the simulations,
it has been shown that $10$ PM iterations
are sufficient to make these errors negligibly small.

\appendix
\section{Proof of Theorem \ref{thm:delta-eigenvector}}
\label{apx:theorem2}

\begin{proof}
In order to prove Theorem~\ref{thm:delta-eigenvector},
the matrix $\pdsmcov$ is written in terms of $\smcov$ and $\avgmat{}$. 
Then, a first order analysis is carried out.
For convenience, we drop the dependency on $P$ from  $\pdsmcov$,
$\pdsmcoveigv{l}$ and $\pvavone{l}$ throughout the proof.

The largest eigenvalue of the matrix $\avgmat{}$ is $\avgmateige_1=1$
and its corresponding eigenvector 
is 
$\avgmateigv_1 = \frac{1}{\sqrt{K}} \ones{K}$, see Sec.~\ref{sec:ac}, thus
\begin{equation}
\begin{aligned}
\dsmcov &= K \left( \sensorsel \avgmat{}^{\itrac}  \sensorsel^T \right) \odot \smcov \\
       &=
       K \left( \sensorsel \sum_{k=1}^K  \avgmateige_k^{\itrac} \avgmateigv_k \avgmateigv_k^H
	   \sensorsel^T \right)  \odot \smcov \\
	   &= \smcov + 
	   K \left( \sum_{k=2}^K  \avgmateige_k^{\itrac} \sensorsel \avgmateigv_k \avgmateigv_k^H
	   \sensorsel^T \right)  \odot \smcov \\
	   &= \smcov + K \sum_{k=2}^{K} \alpha_k^{\itrac} 
	      \vamthree_k \,  \smcov \, \vamthree_k^H,
\end{aligned}
\label{eq:va-1}
\end{equation}
where 
$\vamthree_k=\diag{\sensorsel \pmb{\beta}_k }$
and, for the last equality, the rank one Hadamard product property \cite[p.~104]{johnson1990matrix} is used.
Note that the second term in Eq.~(\ref{eq:va-1}) accounts for the errors 
resulting from the finite number of AC iterations $\itrac < \infty$,
and that $\dsmcov\rightarrow\smcov$ when $\itrac\rightarrow\infty$.
Substituting Eq.~(\ref{eq:error-mcov-cent}) in Eq.~(\ref{eq:va-1})
yields
\begin{equation}
\begin{aligned}
\dsmcov & = 
       \mcov + \error \mcov + K \sum_{k=2}^{K} \avgmateige_k^{\itrac} 
	      \vamthree_k  \left(\mcov + \error \mcov \right) \vamthree^H_k \\
	& \approx \mcov + \error \mcov + K \sum_{k=2}^{K} \avgmateige_k^{\itrac} 
	      \vamthree_k \mcov \vamthree^H_k
\end{aligned}
\label{eq:va-3}
\end{equation} 
where the term
$\sum_{k=2}^{K} \avgmateige_k^{\itrac} \, \vamthree_k \, \error \mcov \, \vamthree^H_k$
is neglected in the approximation 
since the magnitudes of $\avgmateige_2  \mdots \avgmateige_K$ 
are all strictly smaller than one (see Sec.~\ref{sec:ac})
and they are multiplied with the small variation $\error \mcov$.
 
Multiplying Eq.~(\ref{eq:va-3}) from the right with $\dsmcoveigv_l$ and keeping the first order terms,
we find
\begin{equation}
\begin{aligned}
\dsmcov \, \dsmcoveigv_l  & \approx \left(
          \mcov + \error \mcov + K \sum_{k=2}^{K} \avgmateige_k^{\itrac} 
	        \vamthree_k \mcov \vamthree^H_k
	        \right) \left( \mcoveigv_l + \error \dsmcoveigv \right) \\
	        & \approx 
          \mcov \mcoveigv_l + \error \mcov \mcoveigv_l + \vavone_l + 
	        \mcov \error \dsmcoveigv,
\end{aligned}
\label{eq:va-4}
\end{equation}   
where $\vavone_l$ is defined in Eq.~(\ref{eq:def-vavone}).
The left hand side of Eq.~(\ref{eq:va-4}) can be written as 
$\dsmcov \, \dsmcoveigv_l = \dsmcoveige_l \, \dsmcoveigv_l$,
where $\dsmcoveige_l$ is the $l$th eigenvalue of $\dsmcov$.
Expressing $\dsmcoveige_l$ as a perturbation in $\mcoveige_l$, i.e., 
$\dsmcoveige_l=\mcoveige_l + \error\dsmcoveige_l$,
the left hand side of Eq.~(\ref{eq:va-4}) becomes
\begin{equation}
\begin{aligned}
\dsmcov \, \dsmcoveigv_l  & = \left( \mcoveige_l + \error\dsmcoveige_l \right) 
\left( \mcoveigv_l + \error \dsmcoveigv_l \right) \\
   & \approx \mcoveige_l \mcoveigv_l + \error\dsmcoveige_l  \mcoveigv_l + \mcoveige_l \error \dsmcoveigv_l,
\end{aligned}
\label{eq:va-5}
\end{equation}
where only first order terms are kept.    
Substituting Eq.~(\ref{eq:va-5}) in Eq.~(\ref{eq:va-4}) yields
\begin{equation}
\left( \mcov - \mcoveige_l \id_M \right) \error \dsmcoveigv_l
\approx 
          \error\dsmcoveige_l  \mcoveigv_l - \error \mcov \mcoveigv_l - \vavone_l
\label{eq:va-6}
\end{equation}
The matrix $\mcov - \mcoveige_l \id_M$ can be written as 
\begin{equation}
\begin{aligned}
\mcov - \mcoveige_l \id_M  & =  \sum_{k=1, k\neq l}^{M} \left( \mcoveige_k - \mcoveige_l \right) \mcoveigv_l \mcoveigv^H_k  \\ 
   &= \vamfour_{-l} \vamtwo_{-l} \,  \vamfour_{-l}^H
\end{aligned}
\label{eq:va-7}
\end{equation}
where 
$\vamfour_{-l}$ and 
$\vamtwo_{-l}$ are defined in Eq.~(\ref{eq:def-varmone}).
Thus multiplying Eq.~(\ref{eq:va-7}) by $\vamone_l \eq \vamfour_{-l} \vamtwo^{-1}_{-l} \,  \vamfour_{-l}^H$
we find
\begin{equation} 
\begin{aligned}
\error \dsmcoveigv_l
 & \approx 
         \vamone_l \left( \error\dsmcoveige_l  \mcoveigv_l - \error \mcov \mcoveigv_l - \vavone_l \right) \\
 & = -\vamone_l \left( \error \mcov \mcoveigv_l + \vavone_l \right).
\end{aligned}
\label{eq:va-8}
\end{equation}
\end{proof}

\section{Proof of Theorem \ref{thm:mse}}
\label{apx:theorem3}

\begin{proof}
In order to prove Theorem~\ref{thm:mse}, we compute 
$\sosone$
and $\sostwo$ 
based on the expression of $\error \pdsmcoveigv{l}$ 
which we found in Theorem~\ref{thm:delta-eigenvector}.
Then, results from \cite{brillinger2001time} are used to simplify the computed expression.
For convenience, we drop the dependency on $P$ from 
$\pdsmcoveigv{l}$ and $\pvavone{l}$ throughout the proof.
  
Using the result from Theorem \ref{thm:delta-eigenvector}, 
$\sosone$
and $\sostwo$  are written as
\begin{equation}
\begin{aligned}
\sosone
 & \approx 
         \vamone_l  
         \expect { 
         (\error \mcov \mcoveigv_l + \vavone_l)
         (\mcoveigv_m^H \error \mcov  + \vavone_m^H)
         }
         \vamone_m^H \\
 & =    \expect { 
         \vamone_l \error \mcov \mcoveigv_l \mcoveigv_m^H \error \mcov \vamone_m^H 
         }  +  \vamone_l \vavone_l \vavone_m^H \vamone_m^H    
\end{aligned}
\label{eq:va-3-1}
\end{equation}
and
\begin{equation}
\begin{aligned}
\sostwo
 & \approx 
         \vamone_l  
         \expect { 
         (\error \mcov \mcoveigv_l + \vavone_l)
         (\mcoveigv_m^T \error \mcov  + \vavone_m^T)
         }
         \vamone_m^T \\
 & =    
\expect { 
         \vamone_l \error \mcov \mcoveigv_l \mcoveigv_m^T \error \mcov \vamone_m^T 
         }  +  \vamone_l \vavone_l \vavone_m^T \vamone_m^T.
\end{aligned}
\label{eq:va-3-2}
\end{equation}
Using the following results from Theorem 9.2.2 in \cite{brillinger2001time}\footnote{See also the proof of the Theorem 9.2.2 \cite[p.~454]{brillinger2001time}.} 
$$
\expect {
         \vamone_l \error \mcov \mcoveigv_l \mcoveigv_m^H \error \mcov \vamone_m^H 
         }
= \dfrac{\mcoveige_l}{N}
\sum_{\underset{\scriptstyle{i\neq l}}{i=1}}^{M}
\dfrac{\mcoveige_i}{\left( \mcoveige_l - \mcoveige_i \right)^2} 
 \mcoveigv_l \mcoveigv_l^H 
 \chrod{l,m}
$$
and
$$
\expect { 
         \vamone_l \error \mcov \mcoveigv_l \mcoveigv_m^T \error \mcov \vamone_m^T 
         } =
\dfrac{\mcoveige_l \mcoveige_m}{N}
\dfrac{\mcoveigv_l \mcoveigv_m^T}{\left( \mcoveige_l -\mcoveige_m \right)^2}  (\chrod{l,m} - 1) 
$$
in Eq.~(\ref{eq:va-3-1}) and Eq.~(\ref{eq:va-3-2}) proves the theorem.
 
 \end{proof}

\bibliographystyle{plain}

\end{document}